\newtheorem{theorem}{Theorem}[section]
\newtheorem{corollary}[theorem]{Corollary}
\newtheorem{lemma}[theorem]{Lemma}
\newtheorem{definition}[theorem]{Definition}
{\theorembodyfont{\rmfamily} 
}
\def\blksquare{\rule{2mm}{2mm}}
\def\qedsymbol{\blksquare}
\newcommand{\bg}[1]{\medskip\noindent{\it #1}}
\newcommand{\ed}{{\hfill\qedsymbol}\medskip}
\newenvironment{proof}{\bg{Proof} : }{\ed}
\newenvironment{proofof}[1]{\bg{Proof of {#1}} : }{\ed}
\newcommand{\np}{{\em NP}\xspace}
\newcommand{\nphard}{\np-hard\xspace}
\DeclareMathOperator{\E}{E}
\newcommand{\R}{\ensuremath{\mathbb R}}
\newcommand{\Z}{\ensuremath{\mathbb Z}}
\newcommand{\A}{\ensuremath{\mathcal{A}}}
\newcommand{\I}{\ensuremath{\mathcal I}}
\newcommand{\F}{\ensuremath{\mathcal F}}
\newcommand{\D}{\ensuremath{\mathcal D}}
\newcommand{\mE}{\ensuremath{\mathcal E}}
\newcommand{\T}{\ensuremath{\mathcal T}}
\newcommand{\Pc}{\ensuremath{\mathcal P}}
\newcommand{\OPT}{\ensuremath{\mathit{OPT}}}
\newcommand{\sm}{\ensuremath{\setminus}}
\newcommand{\es}{\ensuremath{\emptyset}}
\newcommand{\poly}{\operatorname{poly}}
\newcommand{\e}{\ensuremath{\epsilon}}
\newcommand{\gm}{\ensuremath{\gamma}}
\newcommand{\sse}{\subseteq}
\newcommand{\bufl}{{\textsf{Budget-UFL}}\xspace}
\newcommand{\ufl}{{\textsf{UFL}}\xspace}
\newcommand{\iopt}{\ensuremath{O^*}}
\newcommand{\tx}{\ensuremath{\tilde x}}
\newcommand{\ty}{\ensuremath{\tilde y}}
\newcommand{\tp}{\ensuremath{\tilde p}}
\newcommand{\tq}{\ensuremath{\tilde q}}
\newcommand{\tc}{\ensuremath{\tilde c}}
\newcommand{\tv}{\ensuremath{\tilde v}}
\newcommand{\hx}{\ensuremath{\hat x}}
\newcommand{\hq}{\ensuremath{\hat q}}
\newcommand{\hal}{\ensuremath{\hat\alpha}}
\newcommand{\hbeta}{\ensuremath{\hat\beta}}
\newcommand{\htht}{\ensuremath{\hat\theta}}
\newcommand{\tal}{\ensuremath{\tilde\alpha}}
\newcommand{\by}{\ensuremath{\bar y}}
\newcommand{\bv}{\ensuremath{\bar v}}
\newcommand{\bD}{\ensuremath{\overline {\mathcal D}}}
\newcommand{\tD}{\ensuremath{\widetilde {\mathcal D}}}
\newcommand{\ga}{\ensuremath{\gamma}}
\newcommand{\be}{\ensuremath{\beta}}
\newcommand{\ld}{\ensuremath{\lambda}}
\newcommand{\kp}{\ensuremath{\kappa}}
\newcommand{\al}{\ensuremath{\alpha}}
\newcommand{\tht}{\ensuremath{\theta}}
\newcommand{\dt}{\ensuremath{\delta}}
\newcommand{\sg}{\ensuremath{\sigma}}
\newcommand{\Gm}{\ensuremath{\Gamma}}
\newcommand{\Om}{\ensuremath{\Omega}}
\newcommand{\w}{\ensuremath{\omega}}
\newcommand{\bon}{\ensuremath{\boldsymbol{1}}}
\newcommand{\pub}{\operatorname{pub}}
\newcommand{\bc}{\ensuremath{\overline c}}
\newcommand{\paym}{PayM\xspace}
\newcommand{\cm}{CM\xspace}
\newcommand{\newlp}{\textnormal{(\ref{primal}')}\xspace}
\newcommand{\indprim}{\textnormal{(K-P)}\xspace}
\newcommand{\indnewlp}{\textnormal{(K-P')}\xspace}
\newcommand{\range}{\ensuremath{R}}
\newcommand{\lp}{\ensuremath{\mathrm{LP}}}
\newcommand{\ext}{\ensuremath{\mathrm{ext}}}
\newcommand{\ex}{\ext}
\newcommand{\flp}{\textnormal{(FL-P)}\xspace}
\newcommand{\flopt}{\ensuremath{\OPT_{\text{FL-P}}}\xspace}
\newcommand{\proj}{\ensuremath{\mathsf{proj}}}
\title{Near-Optimal and Robust Mechanism Design for Covering Problems with Correlated 
Players\footnote{A preliminary version appeared as~\cite{MinooeiS13}.}}   
\author{
         Hadi Minooei\thanks{{\tt\{hminooei,cswamy\}@uwaterloo.ca}.
         Dept. of Combinatorics and Optimization, Univ. Waterloo, Waterloo, ON N2L 3G1.
         Research supported partially by NSERC grant 327620-09 and the second author's
         Discovery Accelerator Supplement Award, and Ontario Early Researcher Award.}
\and
         Chaitanya Swamy $^{\text{\thefootnote}}$
}
\date{}
\begin{document}

\maketitle

\vspace*{-3ex}

\begin{abstract}
We consider the problem of designing incentive-compatible, ex-post individually 
rational (IR) mechanisms for covering problems in the Bayesian setting, where players'
types are drawn from an underlying distribution and may be correlated, and the goal is to 
minimize the expected total payment made by the mechanism. 
We formulate a notion of incentive compatibility (IC) that we call 
{\em support-based IC} that is substantially more robust than Bayesian IC, 
and develop black-box reductions from support-based-IC mechanism design to algorithm design.
For single-dimensional settings, this black-box reduction applies even when we only have an
LP-relative {\em approximation algorithm} for the algorithmic problem. 
Thus, we obtain near-optimal mechanisms for various covering settings including 
single-dimensional covering problems, multi-item procurement auctions, and
multidimensional facility location.
\end{abstract}

\section{Introduction}
In a {\em covering mechanism-design problem}, there are players who provide covering
objects and a buyer who needs to obtain a suitable collection of objects so as to
satisfy certain covering constraints (e.g., covering a ground set); 
each player incurs a certain private cost, which we refer to as his {\em type}, for
providing his objects, and the mechanism must therefore pay the players from whom objects 
are procured. We consider the problem of designing incentive-compatible, ex-post
individually rational (IR) mechanisms for {covering problems} (also called 
{\em procurement auctions}) in the {\em Bayesian setting}, where players' types are drawn
from an underlying distribution and may be {\em correlated}, and the goal is to 
{\em minimize the expected total payment made by the mechanism}. 
Consider the simplest such setting of a {\em single-item procurement auction}, where a
buyer wants to buy an item from any one of $n$ sellers. Each seller's private type is the 
cost he incurs for supplying the item and the sellers must
therefore be incentivized via a suitable payment scheme.
Myerson's seminal result~\cite{M81} solves this problem (and other single-dimensional
problems) when players' private types are {\em independent}. However, 
no such result (or characterization) is known when players' types are correlated.
This is the question that motivates our work.

Whereas the analogous revenue-maximization problem for packing domains, such as
combinatorial auctions (CAs), has been extensively studied in the algorithmic mechanism
design (AMD) literature, both in the case of independent and correlated (even
interdependent) player-types (see,
e.g.,~\cite{CaiDW13a,CaiDW13b,CaiDW12,AlaeiFHHM12,ChawlaHMS10,HartN12,DobzinskiFK11,PapadimitriouP11,CaiDW12,RoughgardenT13}   
and the references therein), 
surprisingly, there are {\em almost no results} on the payment-minimization 
problem for covering settings in the AMD literature (see however the discussion in
``Related work'' for some exceptions). 
The economics literature does contain various general
results that apply to both covering and packing problems. However much of this work
focuses on characterizing special cases; see, e.g.,~\cite{Krishna,Klemperer}. An exception
is the work of Cr\'emer and McLean~\cite{CremerM85,CremerM88}, which shows that under
certain conditions, one can devise a Bayesian-incentive-compatible (BIC) mechanism whose
expected total payment exactly equal to the expected cost incurred by the players, albeit
one where players may incur negative utility under certain type-profile realizations.

\vspace{-1ex}
\paragraph{Our contributions.}
We initiate a study of payment-minimization (\paym) problems from the AMD perspective of
designing computationally efficient, near-optimal mechanisms. 
We develop black-box reductions from mechanism design to algorithm design whose
application yields a variety of optimal and near-optimal mechanisms.  
As we elaborate below, covering problems turn out to behave quite differently in certain
respects from packing problems, which necessitates new approaches (and solution
concepts). 

Formally, we consider the setting of {\em correlated players} in the explicit model, that
is, where we have an explicitly-specified {\em arbitrary discrete} joint distribution of
players' types. A commonly-used solution concept in Bayesian settings is Bayesian
incentive compatibility (BIC) and interim individual rationality (interim IR), wherein 
at the {\em interim} stage when a player knows his type but is oblivious of the random
choice of other players' types, truthful participation in the mechanism by all players
forms a Bayes-Nash equilibrium. 
Two serious drawbacks of this solution concept (which are exploited strikingly and
elegantly in~\cite{CremerM85,CremerM88}) are that: (i) a player may regret his decision of 
participating and/or truthtelling {\em ex post}, that is, after observing the realization
of other players' types; and (ii) it is overly-reliant on having precise knowledge of
the true underlying distribution making this a rather {\em non-robust} concept: if the true
distribution differs, possibly even slightly, from the mechanism designer's and/or players'
beliefs or information about it, then the mechanism could lose its IC and IR properties.

On the other hand, the solution concept of dominant-strategy incentive compatibility
(DSIC) and ex-post IR ensures  
that truthful participation is the best choice, and a no-regret choice, for every player
{\em regardless} of the other players' reported types. This is the most robust 
notion of IC and IR, since it is a completely distribution-independent.

We consider the problem of designing near-optimal mechanisms for payment-minimization
problems with robustness being an important consideration. This makes (DSIC, ex-post IR),
which we abbreviate to (DSIC, IR), as the natural ideal.
However, certain difficulties arise in achieving this goal for covering problems (see 
``Differences \ldots packing problems'' below).
We formulate a notion of incentive compatibility that we call 
{\em support-based IC}%
\footnote{The conference version~\cite{MinooeiS13} of this paper referred to this as
  ``robust Bayesian IC.''}  
that, while somewhat weaker than DSIC, is still substantially more robust than BIC  
and at the same time is flexible enough that it allows one to obtain various polytime
near-optimal mechanisms satisfying this notion. 
A {\em support-based-(IC, IR)} mechanism (see Section~\ref{prelim}) 
ensures that truthful participation in the mechanism is in the best interest of every
player (i.e. a ``no-regret'' choice) 
for every type profile in a certain {\em subset} of the type space: for every
player $i$, we impose the IC and IR conditions for all type profiles of the form 
$(\cdot,c_{-i})$ for every profile $c_{-i}$
of other players' types coming from the support of the underlying distribution.
In particular, a support-based-(IC, IR) mechanism ensures that truthful participation is
the best choice for every player {\em even at the ex-post stage} when the other players' 
(randomly-chosen) types are revealed to him.     
Such a mechanism is significantly more robust than a (BIC, interim-IR) 
mechanism since it retains its IC and IR properties for a large class of distributions 
that contains (in particular) every distribution whose support is a subset of the support
of the actual distribution. In other words, in keeping with Wilson's doctrine of
detail-free mechanisms, 
the mechanism functions robustly even 
under fairly limited information about the type-distribution. 

We show that for a variety of settings, one can reduce the support-based-(IC, IR)
payment-minimization (\paym) mechanism-design problem to the algorithmic
cost-minimization (\cm) problem of finding an outcome that minimizes the total
cost incurred. 
Moreover, this black-box reduction applies to: (a) single-dimensional settings even when
we only have an LP-relative {\em approximation algorithm} for the \cm problem (that 
is required to work only with nonnegative costs) (Theorem~\ref{bicredn2}); and 
(b) {\em multidimensional problems} with additive types (Corollary~\ref{bicredn1}). 

Our reduction yields near-optimal support-based (IC-in-expectation, IR) mechanisms for a 
variety of covering settings such as
(a) various single-dimensional covering problems including single-item procurement
auctions (Table~\ref{onedapps}); 
(b) multi-item procurement auctions (Theorem~\ref{multiproc}); and 
(c) multidimensional facility location (Theorem~\ref{uflres}).
(Support-based IC-in-expectation means that the support-based-IC guarantee holds for the 
{\em expected utility} of a player, where the expectation is over the random coin tosses 
{of the mechanism}.) 
In Section~\ref{extn}, we consider some extensions involving both weaker 
(but more robust than (BIC, interim IR)) solution concepts and the stronger (DSIC, IR)
solution concept. We obtain the same guarantees under the various weaker 
solution concepts, and adapt our techniques to obtain (DSIC-in-expectation, IR) mechanisms
with the same guarantees for single-dimensional problems in time exponential in the number   
of players (Section~\ref{dsic}).
These are the first results for the \paym mechanism-design problem for covering
settings with correlated players under a notion stronger than (BIC, interim IR). To our
knowledge, our results are new even for the simplest covering setting of single-item
procurement auctions.

\vspace{-1ex}
\paragraph{Our techniques.}
The starting point for our construction is a linear-programming (LP) relaxation
\eqref{primal} for the problem of computing an optimal support-based-(IC,
IR)-in-expectation mechanism.  
This was also the starting point in the work of~\cite{DobzinskiFK11}, which considers the  
revenue-maximization problem for CAs, but the covering nature 
of the problem 
makes it difficult to apply 
certain techniques utilized successfully in the context of packing problems (as described
below). 

We show that an optimal solution to \eqref{primal} can be computed given an optimal
algorithm $\A$ for the \cm problem since $\A$ can be used to obtain a
separation oracle for the dual LP. Next, we prove that a feasible solution
to \eqref{primal} can be extended to a support-based-(IC-in-expectation, IR) mechanism
with no larger objective value.

For single-dimensional problems, we show that even LP-relative $\rho$-approximation
algorithms for the \cm problem can be utilized, as follows.
We move to a relaxation of \eqref{primal}, where we replace the set of allocations with
the feasible region of the \cm-LP. This can be solved efficiently, since the separation
oracle for the dual can be obtained by optimizing over the feasible region of \cm-LP,
which can be done efficiently! But now we need to work harder to ``round'' 
an optimal solution $(x,p)$ to the relaxation of \eqref{primal} and obtain a
support-based-(IC-in-expectation, IR) mechanism. Here, we exploit the Lavi-Swamy~\cite{LaviS11} 
convex-decomposition procedure, using which we can show (roughly speaking) that we can
decompose $\rho x$ into a convex combination of allocations. This allows us to
obtain a support-based-(IC-in-expectation, IR) mechanism while blowing up the payment by a 
$\rho$-factor. 

In comparison with the reduction in~\cite{DobzinskiFK11}, which is the work most
closely-related to ours, our reduction from support-based-IC mechanism design to the
algorithmic \cm problem is stronger in the following sense.
For single-dimensional settings, it applies even with LP-relative {\em approximation
algorithms}, and the approximation algorithm is required to work only for ``proper inputs''
with nonnegative costs. (Note that whereas for packing problems, allowing negative-value
inputs can be benign, this can change the character of a covering problem considerably.)
In contrast, Dobzinski et al.~\cite{DobzinskiFK11} require an exact algorithm for the
analogous social-welfare-maximization (SWM) problem.   

\vspace{-1ex}
\paragraph{Differences with respect to packing problems.}
At a high level, our method of writing an LP for the underlying mechanism-design
problem and solving it given an algorithm for an associated algorithmic problem is similar  
to the procedure in Dobzinski et al.~\cite{DobzinskiFK11}. However, we encounter three
distinct sources of difficulty when dealing with covering problems vis-a-vis packing
problems.  

First, as noted in~\cite{DobzinskiFK11}, the LP can only encode the IC and IR conditions 
for a finite set of type profiles, whereas, with an infinite type space, both
support-based-(IC, IR) and (DSIC, IR) require the mechanism to satisfy 
the IC and IR conditions for an infinite set of type profiles. 
Therefore, to translate the LP solution to a suitable mechanism, we need to 
solve the ``extension problem'' of extending an allocation and pricing rule
defined on a (finite) subset of the type space to the entire type space while preserving
its IC and IR properties. This turns out to be a much more difficult task for covering
problems than for packing domains. 
The key difference is that for a packing setting such as combinatorial auctions, one can 
show that {\em any} LP solution%
---in particular, the optimal LP solution---can be converted into a (DSIC-in-expectation,
IR) mechanism without any loss in expected revenue (see Section~\ref{append-packextn}).
(Consequently, \cite{DobzinskiFK11} obtain (DSIC-in-expectation, IR) mechanisms.)
Intuitively, this works because one can focus on a single player by
allocating no items to the other players. 
Clearly, one cannot mimic this for covering problems: dropping players may render the
problem infeasible, and it is not clear how to extend an LP-solution to a
(DSIC-in-expectation, IR) mechanism for covering problems.
We suspect that not every LP solution or support-based-(IC, IR) mechanism can be
extended to a (DSIC, IR) mechanism, and 
that there is a gap between the optimal expected total payments of
support-based-(IC-in-expectation, IR) and (DSIC, IR) mechanisms. We leave these as open 
problems.%
\footnote{We show in Section~\ref{dsic} that if we expand our LP to include IC and IR
constraints for a much larger (exponential-sized) set of type-profiles then, for
single-dimensional settings, it is possible to extend an LP-solution to a
(DSIC-in-expectation, IR) mechanism. 
} 

Due to this complication, we sacrifice a modicum of the IC, IR properties in favor of
obtaining polytime near-optimal mechanisms and settle for the weaker, but still quite
robust notion of support-based (IC-in-expectation, IR). We consider this to be a reasonable 
starting point for exploring mechanism-design solutions for covering problems, which 
leads to various interesting research directions.%
\footnote{For covering problems, even formulating the LP in a way that its solution can be
extended to a support-based-(IC, IR) mechanism is somewhat tricky due to the 
following complication. 
We need to argue (see Lemma~\ref{estim}) that there is an optimal
support-based-(IC-in-expectation, IR) mechanism such that for every player $i$, and every
profile $c_{-i}$ coming from the support of the underlying distribution,  
there is a type profile $(m_i,c_{-i})$ under which the mechanism never procures objects
from $i$, and include this condition in the LP (otherwise, the IC and IR conditions  
would force the LP-extension to make arbitrarily large payments).}  

A second difficulty, which we have alluded to above, arises due to the fact that solving
the LP requires one to solve the \cm problem with negative-valued inputs. 
This is also true of packing problems~\cite{DobzinskiFK11} (where one needs to solve the
SWM problem), even in the single-item setting~\cite{M81} 
(where reserve prices arise due to negative virtual valuations).  
While this is not a problem if we have an optimal algorithm for the \cm problem, it
creates serious issues, even in the single-dimensional setting, if we only have an
approximation algorithm at hand; in particular, the standard notion of 
approximation becomes meaningless since the optimum could be negative.
In contrast, for packing problems with single-dimensional types (or additive
types~\cite{CaiDW12,CaiDW13a,CaiDW13b}),   
these issues are more benign 
since one may always discard players (or options of players) with negative value. 
In particular, 
an approximation algorithm can be used to obtain an
approximate separation oracle for the dual LP, and thus obtain a near-optimal solution to  
the primal LP via a well-known technique in approximation algorithms. (We sketch this
extension of a result of~\cite{DobzinskiFK11} in Appendix~\ref{append-packextn}.)

Finally, a stunning aspect where covering and packing problems diverge can be seen when
one considers the idea of a $k$-lookahead auction~\cite{Ronen,DobzinskiFK11}. This was
used by~\cite{DobzinskiFK11} to convert their results in the explicit model to the 
oracle model introduced by~\cite{Ronen}. 
This however fails spectacularly in the covering setting. One can show that even for
single-item procurement auctions, dropping even a single player can lead to an arbitrarily 
large payment compared to the optimum (see Appendix~\ref{append-lookahead}).

\vspace{-1ex}
\paragraph{Other related work.}
In the economics literature, the classical results of Cr\'emer and
McLean~\cite{CremerM85,CremerM88} and McAfee and Reny~\cite{McafeeR}, also apply to
covering problems, and show that one can devise a (BIC, interim IR) mechanism with
correlated players whose expected total payment is at most the expected total cost
incurred provided the underlying type-distribution satisfies a certain full-rank
assumption. These mechanisms may however cause a player to have negative utility under
certain realizations of the random type profile.

The AMD literature has concentrated mostly on the independent-players
setting~\cite{CaiDW13a,CaiDW13b,CaiDW12,AlaeiFHHM12,ChawlaHMS10,HartN12}. There has 
been some, mostly recent, work that also considers correlated
players~\cite{Ronen,RonenL05,DobzinskiFK11,PapadimitriouP11,CaiDW12,RoughgardenT13}. 
Much of this work pertains to the revenue-maximization setting; an exception
is~\cite{RonenL05}, which is discussed below.
Ronen~\cite{Ronen} considers the single-item auction setting in the {\em oracle model},
where one samples from the distribution conditioned on some players' values. He proposes
the (1-) lookahead auction and shows that it achieves a $\frac{1}{2}$-approximation. 
Papadimitriou and Pierrakos~\cite{PapadimitriouP11} show that the optimal (DSIC, IR)
mechanism for the single-item auction can be computed efficiently with at most 2 players,
and is \nphard otherwise. 
Cai et al.~\cite{CaiDW12} give a characterization of the optimal
auction under certain settings. 
Roughgarden and Talgam-Cohen~\cite{RoughgardenT13} consider interdependent types, which
generalizes the correlated type-distribution setting, and develop an analog of Myerson's
theory for certain such settings.   

Ronen and Lehmann~\cite{RonenL05} consider the \paym problem in the setting where a buyer
wants to buy an item from sellers who can supply the item in on of many configurations and
incur private costs for supplying the item. However, this procurement problem is in fact a
{\em packing problem}: one can view a solution to be feasible if it selects at most one
configuration for procurement; in particular, the buyer has the flexibility of 
{\em not procuring} the item. As noted earlier, this flexibility drastically alters the
character of the mechanism-design problem. Not surprisingly, the results therein, which
are based on lookahead auctions, do not apply in the covering setting (as noted above). 

Various reductions from revenue-maximization to SWM are given
in~\cite{CaiDW12,CaiDW13a,CaiDW13b}. The reductions in~\cite{CaiDW12,CaiDW13b} also apply
to covering problems and the \paym objective, but they are incomparable to our results. 
These works focus on the (BIC, interim-IR) solution concept, which 
is a rather weak/liberal notion for correlated distributions.  
Most (but not all) of these consider independent players and additive
valuations, and often require that the SWM-algorithm also work with negative values, which
is a benign requirement for downward-closed environments such as CAs but is quite
problematic for covering problems when only has an approximation algorithm. 
Cai et al.~\cite{CaiDW12} consider correlated players and 
obtain mechanisms having running time polynomial in the maximum support-size of
the marginal distribution of a player, which could be substantially smaller than the
support-size of the entire distribution. 
This savings can be traced to the use of the (BIC, interim-IR) notion which
allows~\cite{CaiDW12} to work with a compact description of the mechanism.
It is unclear if these ideas are applicable when one considers robust-(BIC, IR)
mechanisms. A very interesting open question is whether one can design 
robust-(BIC-in-expectation, IR) mechanisms having running time polynomial in the
support-sizes of the marginal player distributions (as in~\cite{CaiDW12,DobzinskiFK11}).

\section{Preliminaries} \label{prelim}

\paragraph{Covering mechanism-design problems.}
We adopt the formulation in~\cite{MinooeiS12} to describe general covering
mechanism-design problems. 
There are some items that need to be covered, and $n$ players who provide covering
objects. Let $[k]$ denote the set $\{1,\ldots,k\}$. 
Each player $i$ provides a set $\T_i$ of covering objects.
All this information is public knowledge.  
Player $i$ has a {\em private cost} or {\em type} vector $c_i=\{c_{i,v}\}_{v\in\T_i}$,
where $c_{i,v}\geq 0$ is the cost he incurs for providing object $v\in\T_i$; for
$T\sse\T_i$, we use $c_i(T)$ to denote $\sum_{v\in T}c_{i,v}$. 
A feasible solution or allocation selects a subset $T_i\sse\T_i$ for each agent $i$,  
denoting that $i$ provides the objects in $T_i$, such that $\bigcup_iT_i$ covers all the
items. Given this solution, each agent $i$ incurs the private cost $c_i(T_i)$, and the
mechanism designer incurs a publicly known cost $\pub(T_1,\ldots,T_n)\geq 0$, which may be
used to encode any feasibility constraints in the covering problem. 

Let $C_i$ denote the set of all possible types of agent $i$, and 
$C=\prod_{i=1}^n C_i$. We assume (for notational simplicity) that $C_i=\R^{|\T_i|}_+$.
Let $\Om:=\{(T_1,\ldots,T_n): \pub(T_1,\ldots,T_n)<\infty\}$ be the (finite) set of all
feasible allocations.  
For a tuple $x=(x_1,\ldots,x_n)$, we use $x_{-i}$ to denote
$(x_1,\ldots,x_{i-1},x_{i+1},\ldots,x_n)$. Similarly, let $C_{-i}=\prod_{j\neq i}C_j$.  
For an allocation $\w=(T_1,\ldots,T_n)$, we sometimes use $\w_i$ to
denote $T_i$, $c_i(\w)$ to denote $c_i(\w_i)=c_i(T_i)$, and $\pub(\w)$ to denote
$\pub(T_1,\ldots,T_n)$.
We make the mild assumption that $\pub(\w')\leq\pub(\w)$ if $\w_i\sse\w'_i$ for all $i$;
so in particular, if $\w$ is feasible, then adding covering objects to the $\w_i$s
preserves feasibility. 

A (direct revelation) {\em mechanism} $M=({\cal A},p_1,\ldots,p_n)$ for a covering problem
consists of an allocation algorithm ${\cal A}: C\mapsto\Om$ and a 
payment function $p_i:C\mapsto\R$ for each agent $i$. 
Each agent $i$ reports a cost function $c_i$ (that might be different from his true cost
function). The mechanism computes the allocation $\A(c)=(T_1,\ldots,T_n)=\w \in \Om$, 
and pays $p_i(c)$ to each agent $i$. The {\it utility} $u_i(c_i,c_{-i};\bc_i)$ that 
player $i$ derives when he reports $c_i$ and the others report 
$c_{-i}$ is $p_i(c)-\bc_i(\w_i)$ where $\bc_i$ is his true cost function, and each agent $i$
aims to maximize his own utility. 
We refer to $\max_i|\T_i|$ as the dimension of a covering problem.
Thus, for a {\em single-dimensional} problem, 
each player $i$'s cost can  be specified as $c_i(\w)=c_i\al_{i,\w}$, where $c_i\in\R_+$ is
his private type and $\al_{i,\w}=1$ if $\w_i\neq\es$ and 0 otherwise.   

The above setup yields a {\em multidimensional covering mechanism-design problem} with
{\em additive types}, where by additivity we mean that 
the private cost that a player $i$ incurs 
for providing a set $T\sse\T_i$ of objects is additive across the objects in $T$ (i.e., it
is $\sum_{v\in T_i}c_{i,v}$). Notice that if $c_i,c'_i\in C_i$, then
the type $c_i+c'_i$ defined by $\{c_{i,v}+c'_{i,v}\}_{v\in\T_i}$ is also in $C_i$ and
satisfies $(c_i+c'_i)(\w)=c_i(\w)+c'_i(\w)$ for all $\w\in\Om$.
It is possible to define more general multidimensional settings, but
additive types is a reasonable starting point to explore the multidimensional
covering mechanism-design setting. (As noted earlier, there has been almost no work on
designing polytime, near-optimal mechanisms for covering problems.)   

\vspace{-1ex}
\paragraph{The Bayesian setting.}
We consider {\em Bayesian} settings where there is an underlying 
publicly-known {\em discrete} and possibly {\em correlated} joint type-distribution on $C$
from which the players' types are drawn.  
We consider the so-called explicit model, where the players' type distribution is
explicitly specified. We use $\D\subseteq C$ to denote the support of the type
distribution, and $\Pr_{\D}(c)$ to denote the probability of realization of $c\in C$.  
Also, we define $\D_i:=\{c_i\in C_i: \exists c_{-i}\text{ s.t. }(c_i,c_{-i})\in\D\}$, and 
$\D_{-i}$ to be  $\{c_{-i}: \exists c_i\text{ s.t. }(c_i,c_{-i})\in\D\}$. 

\vspace{-1ex}
\paragraph{Solution concepts.}
A mechanism sets up a game between players, and the solution concept dictates certain
desirable properties that this game should satisfy, so that one can reason about the
outcome when rational players are presented with a mechanism satisfying the solution
concept.  
The two chief properties that one seeks to capture 
relate to incentive compatibility (IC), which (roughly speaking) means that every agent's
best interest is to reveal his type truthfully, and individual rationality (IR), which is
the notion that no agent is harmed by participating in the mechanism. Differences and 
subtleties arise in Bayesian settings depending on the stage at which we impose these
properties and how robust we would like these properties to be with respect to the
underlying type distribution.

\begin{definition} \label{bic}
A mechanism $M=\bigl(\A,\{p_i\}\bigr)$ is {\em Bayesian incentive compatible} (BIC) and 
{\em interim IR} if for every player $i$ and every $\bc_i\in\D_i,c_i\in C_i$, we have 
$\E_{c_{-i}}[u_i(\bc_i,c_{-i};\bc_i)|\bc_i]
\geq\E_{c_{-i}}[u_i(c_i,c_{-i};\bc_i)|\bc_i]$ (BIC)
and
$\E_{c_{-i}}[u_i(\bc_i,c_{-i};\bc_i)|\bc_i]\geq 0$ (interim IR), where $\E_{c_{-i}}[.|\bc_i]$ denotes
the expectation over the other players' types conditioned on $i$'s type being $\bc_i$.
\end{definition}

As mentioned earlier, the (BIC, interim-IR) solution concept may yet lead to
ex-post ``regret'', and is quite non-robust in the sense that the mechanism's IC and
IR properties rely on having detailed knowledge of the distribution; 
thus, in order to be confident that a BIC mechanism achieves 
its intended functionality, one must be confident about the ``correctness'' of the
underlying distribution, and learning this information might entail significant cost. 
To remedy these weaknesses, we propose and investigate the following stronger IC and IR
notions.  

\begin{definition} \label{expostbic}
A mechanism $M=\bigl(\A,\{p_i\}\bigr)$ is {\em support-based IC} and 
{\em support-based IR}, 
if for every player $i$, every $\bc_i,c_i\in C_i$, and every
$c_{-i}\in\D_{-i}$, we have  
$u_i(\bc_i,c_{-i};\bc_i) \geq u_i(c_i,c_{-i};\bc_i)$ (support-based IC) and
$u_i(\bc_i,c_{-i};\bc_i)\geq 0$ (support-based IR). 
\end{definition}

Support-based (IC, IR) ensures that participating truthfully in the mechanism is
in the best interest of every player even at the ex-post stage 
when he knows the realized types of all players. 
To ensure that support-based IC and support-based IR are compatible, we focus on 
{\em monopoly-free} settings: for every player $i$, there is some $\w\in\Om$ with
$\w_i=\es$.  

Notice that support-based (IC, IR) is subtly weaker than the notion of (dominant-strategy IC
(DSIC), IR), wherein the IC and IR conditions of Definition~\ref{expostbic} must hold for
all $c_{-i}\in C_{-i}$, 
ensuring that truthtelling and participation are no-regret choices for a player even if
the other players' reports are outside the support of the underlying type-distribution. 
We focus on support-based IC because it forms a suitable middle-ground between BIC
and DSIC: it inherits the desirable robustness properties of DSIC, making it much
more robust than BIC (and closer to a worst-case notion), and yet is flexible
enough that one can devise polytime mechanisms satisfying this solution concept.

It might seem strange that in the definition of support-based (IC, IR), we consider
player $i$'s incentives for types outside of $i$'s support and for type-profiles that are 
inconsistent with the underlying distribution. Keeping robustness in mind,
our goal here is to approach the ideal of (DSIC, IR) and we have therefore formulated the 
most-robust notion that permits us to devise polytime near-optimal mechanisms satisfying
this notion. 
In Section~\ref{extn}, we consider various alternate solution concepts that, while weaker
than support-based (IC, IR), still retain its robustness properties to a large extent, and 
show that our results extend easily to these solution concepts.

The above definitions are stated for a deterministic mechanism, but they have
analogous extensions to a randomized mechanism $M$; 
the only change is that the $u_i(.)$ and $p_i(.)$ terms are now replaced by the expected
utility $\E_M[u_i(.)]$ and expected price $\E_M[p_i(.)]$ respectively, where the
expectation is over the random coin tosses of $M$. 
We denote the analogous solution concept for a randomized mechanism by appending 
``in expectation'' to the solution concept, e.g., a (BIC, interim IR)-in-expectation 
mechanism denotes a randomized mechanism whose expected utility 
satisfies the BIC and interim-IR requirements stated in Definition~\ref{bic}. 

A support-based-(IC, IR)-in-expectation mechanism $M=\bigl(\A,\{p_i\}\bigr)$ can be easily
modified so that the IR condition holds  
{\em with probability 1} (with respect to $M$'s coin tosses) while the expected payment to
a player (again over $M$'s coin tosses) is unchanged: on input $c$, if
$\A(c)=\w\in\Om$ with probability $q$, the new mechanism returns, with probability $q$,
the allocation $\w$, and payment $c_i(\w)\cdot\frac{\E_M[p_i(c)]}{\E_M[c_i(\w)]}$ to each
player $i$ (where we take $0/0$ to be $0$, so if $c_i(\w)=0$, the payment to $i$ is 0). 
Thus, we obtain a mechanism whose expected utility satisfies the support-based-IC condition,
and IR holds with probability 1 for all $\bc_i\in C_i,c_{-i}\in\D_{-i}$, 
A similar transformation can be applied to a (DSIC, IR)-in-expectation mechanism. 

\vspace{-1ex}
\paragraph{Optimization problems.}
Our main consideration is to minimize the expected total payment of the
mechanism. 
It is natural to also incorporate the mechanism-designer's cost into the objective. 
Define the {\em disutility} of a mechanism $M=\bigl(\A,\{p_i\}\bigr)$ under input $c$ to be
$\sum_i p_i(c)+\kp\cdot\pub\bigl(\A(c)\bigr)$, where $\kp\geq 0$ is a given scaling factor.  
Our objective is to devise a polynomial-time support-based-(IC (in-expectation), IR)-mechanism
with minimum expected disutility.
Since most problems we consider have $\pub(\w)=0$ for all feasible allocations, in
which case disutility equals the total payment, abusing terminology slightly, we refer
to the above mechanism-design problem as the {\em payment-minimization} (\paym) problem. 
(An exception is metric {\em uncapacitated facility location} (\ufl), where players
provide facilities and the underlying metric is public knowledge; 
here, $\pub(\w)$ is the total client-assignment cost of the solution $\w$.)
We always use $\iopt$ to denote the expected disutility of an optimal mechanism for the
\paym problem under consideration. 

We define the {\em cost minimization} (\cm) problem to be the {\em algorithmic}
problem of finding $\w\in\Om$ 
that minimizes the total cost $\sum_i c_i(\w)+\pub(\w)$ incurred.

The following technical lemma, whose proof we defer to Appendix~\ref{append-prelim}, 
will prove quite useful since it allows us to 
restrict the domain to a bounded set, which is essential to achieve IR with finite
prices. (For example, in the single-dimensional setting, the payment is equal to the
integral of a certain quantity from $0$ to $\infty$, and a bounded domain ensures that
this is well defined.)   
Note that such complications do not arise for packing problems.  
Let $\bon_{\T_i}$ be the $|\T_i|$-dimensional all 1s vector.
Let $\I$ denote the input size.

\begin{lemma} \label{estim}
We can efficiently compute an estimate $m_i>\max_{c_i\in\D_i,v\in\T_i}c_{i,v}$ with $\log
m_i=\poly(\I)$ for all $i$ such that there is an optimal
support-based-(IC-in-expectation, IR) mechanism 
$M^*=\bigl(\A^*,\{p^*_i\}\bigr)$ 
where $\A^*(m_i\bon_{\T_i},c_{-i})_i=\es$ with probability $1$
(over the random choices of $M^*$) 
for all $i$ and all $c_{-i}\in\D_{-i}$.
\end{lemma}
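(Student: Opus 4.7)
\bg{Proof plan}:
The plan is to exploit the fact that the expected disutility objective depends only on the mechanism's behavior on the support $\D$, while picking $m_i$ large enough that we can modify an optimal mechanism to drop player $i$ on the synthetic input $(m_i\bon_{\T_i},c_{-i})$ consistently with robust-(BIC-in-expectation, IR).

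First I would bound the payments in an LP-optimal mechanism. The LP \eqref{primal} encoding the optimal robust-(BIC-in-expectation, IR) mechanism has coefficients (arising from the explicitly specified discrete distribution and the costs) of bit-complexity $\poly(\I)$, so by the standard bit-complexity bounds on vertex coordinates of rational LPs, there is an optimal extreme-point solution $M^* = (\A^*, \{p^*_i\})$ with $p^*_i(c)\le P$ for every $c\in\D$ and every $i$, where $\log P=\poly(\I)$ and $P$ is efficiently computable. Set $m_i:=\max(M_{\max},P)+1$, where $M_{\max}:=\max_{i,\,c_i\in\D_i,\,v\in\T_i}c_{i,v}$ is efficiently computable from the explicit $\D$; then $m_i>M_{\max}$ and $\log m_i=\poly(\I)$.

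Next I would construct $\hat{M}$ by modifying $M^*$ only on inputs outside $\D$: on each input $(m_i\bon_{\T_i},c_{-i})$ with $c_{-i}\in\D_{-i}$, let $\hat{M}$ select some $\w\in\Om$ with $\w_i=\es$ (which exists by the monopoly-free assumption) and pay $0$ to player $i$; the extension to the remaining inputs in $C\setminus\D$ is defined so as to preserve robust-(BIC, IR) globally, e.g., by a uniform ``drop any player whose report lies outside $\D_i$'' rule. Since the expected disutility depends only on $\D$-inputs, $\hat{M}$ has the same objective value as $M^*$.

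The critical verification is robust-BIC for the virtual true type $\bc_i=m_i\bon_{\T_i}$ deviating to some $c_i\in\D_i$: the truth utility under $\hat{M}$ is $0$, while the deviation utility equals $p^*_i(c_i,c_{-i})-m_i|\A^*_i(c_i,c_{-i})|$, which is at most $P-m_i<0$ whenever $|\A^*_i(c_i,c_{-i})|\ge 1$, yielding BIC in this case. The main obstacle will be the edge case $|\A^*_i(c_i,c_{-i})|=0$, where the required inequality degenerates to $p^*_i(c_i,c_{-i})\le 0$. To handle this, I would establish a ``no-allocation, no-payment'' property for a suitable optimum: among optimal LP solutions one can always select $M^*$ with $p^*_i(c)=0$ whenever $|\A^*_i(c)|=0$, via a local modification that zeros out such payments while making the (benign) simultaneous adjustments needed to preserve the remaining robust-BIC and IR constraints, in analogy with the corresponding Myerson-style property in single-dimensional settings.
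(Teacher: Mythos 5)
Your overall strategy matches the paper's: work with a bit-bounded extreme-point LP solution over the support, establish a ``no-allocation, no-payment'' normalization (which the paper proves exactly as you suggest, by subtracting $\hq_{i,c}$ from all payments $\hq_{i,(\tc_i,c_{-i})}$ in that row), extend the solution by dropping player $i$ on the synthetic input $(m_i\bon_{\T_i},c_{-i})$, and verify the BIC inequalities. However, your choice of $m_i$ is too small, and the way you argue the critical BIC inequality overlooks the randomization --- which is precisely the subtlety the lemma is about.

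You write that the deviation utility of the phantom type $m_i\bon_{\T_i}$ reporting $c_i\in\D_i$ is $p^*_i(c_i,c_{-i})-m_i\,|\A^*_i(c_i,c_{-i})|$, ``at most $P-m_i<0$ whenever $|\A^*_i(c_i,c_{-i})|\ge 1$.'' But $\A^*$ is a \emph{randomized} allocation, so the cost term is really $m_i\cdot\E_{\A^*}\bigl[|\A^*(c_i,c_{-i})_i|\bigr]$, and this expectation can be an arbitrarily small positive number; the dichotomy ``either $=0$ or $\ge 1$'' does not hold. This is exactly why the paper remarks that the deterministic case is easy but the randomized case is tricky: a mechanism may allocate to $i$ with tiny probability, in which case $p^*_i(c_i,c_{-i})-m_i\,\E[|\A^*_i|]$ need not be $\le 0$ for $m_i$ only slightly larger than $P$. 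To close the gap you need one more use of the extreme-point bit bound you already invoked: for an optimal basic solution all variables are integer multiples of $1/N$ with $\log N=\poly(\I)$, so if $\E[|\A^*_i(c_i,c_{-i})|]>0$ then it is at least $1/N$. Hence it suffices to take $m_i\ge N\cdot(\text{total payments})$, which is what the paper does (it sets $m_i\ge N\sum_{i,c}\hq_{i,c}$), and then the deviation utility is at most $\hq_{i,(c_i,c_{-i})}-m_i/N\le 0$ whenever the allocation is nonempty with positive probability; the case of zero probability is covered by your ``no-allocation, no-payment'' normalization. Your $m_i:=\max(M_{\max},P)+1$ is therefore not large enough.

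A secondary issue is circularity: you take an extreme point of ``the LP \eqref{primal},'' but \eqref{primal} is defined over $\bD$, which already includes the synthetic types $m_i\bon_{\T_i}$ you are trying to construct. You must instead start from the LP restricted to $\bigcup_i(\D_i\times\D_{-i})$ (the paper's \eqref{eprim}), derive the bit bound and the value of $m_i$ from that LP, and only then extend the solution to the enlarged domain $\bD$ and invoke Theorem~\ref{multiround}. Also, when extending, one must check \eqref{e3} in \emph{both} directions and for the new types against each other; the paper's choice $m_i\ge 2\sum_{i,v}\max_{c_i\in\D_i}c_{i,v}$ (the first term of its max) is what guarantees the cost-minimizing extension allocation has $\w_i=\es$, making those checks go through.
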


It is easy to obtain the stated estimates if we consider only {\em deterministic}
mechanisms, but it turns out to be tricky to obtain this when one allows randomized
mechanisms due to the artifact that a randomized mechanism may choose arbitrarily
high-cost solutions as long as they are chosen with small enough probability. 
In the sequel, we set $\bD_i:=\D_i\cup\{m_i\bon_{\T_i}\}$ for all $i\in[n]$, 
and $\bD:=\bigcup_i(\bD_i\times\D_{-i})$. Note that $|\bD|=O(n|\D|^2)$.

\section{LP-relaxations for the payment-minimization problem} \label{lp}
The starting point for our results is the LP \eqref{primal} that essentially encodes the
payment-minimization problem. 
Throughout, we use $i$ to index players, $c$ to index type-profiles in
$\bD$, and $\w$ to index $\Om$. 
We use variables $x_{c,\w}$ to denote the probability of choosing $\w$, and
$p_{i,c}$ to denote the expected payment to player $i$, for input $c$.
For $c\in\bD$, let $\Om(c)=\Om$ if $c\in\bigcup_i(\D_i\times\D_{-i})$, and otherwise if 
$c=(m_i\bon_{\T_i},c_{-i})$, let $\Om(c)=\{\w\in\Om: \w_i=\es\}$ (which is non-empty since
we are in a monopoly-free setting). 

\begin{alignat}{3}
\min & \quad & \sum_{c\in\D}{\textstyle \Pr_\D}(c)
\Bigl(\sum_ip_{i,c}&+\kp\sum_{\w}x_{c,\w}\pub(\w)\Bigr) \tag{P} \label{primal} \\
\text{s.t.} && \sum_{\w}x_{c,\w} & = 1 \qquad && \forall c \in {\bD} \label{e2} \\
&& p_{i,(c_i,c_{-i})} - \sum_{\w}c_i(\w)x_{(c_i,c_{-i}),\w} & \geq 
p_{i,(c'_i,c_{-i})}-\sum_{\w}c_i(\w)x_{(c'_i,c_{-i}),\w} \quad 
&& \forall i, c_i,c'_i\in\bD_i, c_{-i}\in\D_{-i} \label{e3} \\
&& p_{i,(c_i,c_{-i})} - \sum_{\w}c_i(\w)x_{(c_i,c_{-i}),\w} & \geq 0 \qquad 
&& \forall i, c_i\in\bD_i, c_{-i}\in\D_{-i} \label{e4} \\
&& p,x  \geq 0, \quad x_{c,\w} & =0 \quad && \forall c, \w\notin\Om(c). \label{e45} 
\end{alignat}
\eqref{e2} encodes that an allocation is chosen for every $c\in\bD$, and
\eqref{e3} and \eqref{e4} encode the support-based-IC and support-based-IR conditions
respectively. Lemma~\ref{estim} ensures that \eqref{primal} correctly encodes \paym,
so that $\OPT:=\OPT_{\text{\ref{primal}}}$ is a lower bound on the expected disutility
of an optimal mechanism.

Our results are obtained by computing an optimal solution to \eqref{primal}, or a
further relaxation of it, and translating this to a near-optimal 
support-based-(IC-in-expectation, IR) mechanism. Both steps come with their own challenges. 
Except in very simple settings (such as single-item procurement auctions), $|\Om|$ is
typically exponential in the input size, and therefore it is not clear how to
solve \eqref{primal} efficiently. We therefore consider the dual LP \eqref{dual}, which 
has variables $\ga_c$, $y_{i,(c_i,c_{-i}),c'_i}$ and $\be_{i,(c_i,c_{-i})}$ corresponding
to \eqref{e2}, \eqref{e3} and \eqref{e4} respectively. 
\begin{alignat}{1}
\max & \quad \qquad \sum_c \ga_c \tag{D} \label{dual} \\
\text{s.t.} &  
\sum_{i:c\in\bD_i\times\D_{-i}}\Bigl(\sum_{c'_i\in\bD_i}\bigl(c_i(\w)y_{i,(c_i,c_{-i}),c'_i}
- c'_i(\w)y_{i,(c'_i,c_{-i}),c_i}\bigr)+c_i(\w)\be_{i,c}\Bigr) \notag \\[-1.25ex]
& \qquad \qquad \qquad \qquad \quad +\kp\cdot{\textstyle \Pr_\D}(c)\pub(\w) 
\ \geq \ga_c \qquad \qquad \forall c\in{\bD}, \w\in\Om(c) \label{e5} \\
& \sum_{c'_i\in\bD_i}\bigl(y_{i,(c_i,c_{-i}),c'_i} - y_{i,(c'_i,c_{-i}),c_i}\bigr)+\be_{i,c_i,c_{-i}}
\leq {\textstyle \Pr_\D}(c) 
\qquad \forall i, c_i\in\bD_i, c_{-i}\in\D_{-i} \label{e6} \\[-1.5ex]
& \quad \qquad y,\be \geq 0. \label{e7}
\end{alignat}
With additive types, 
one can encode the LHS of \eqref{e5} as $\sum_i\tc_i(\w)$ for a suitably-defined additive
type (depending on $c$) $\tc_i=(\tc_{i,v})_{v\in\T_i}$.
Thus, the separation problem for constraints \eqref{e5} amounts to determining if the
optimal value of the \cm problem 
defined by a certain additive type profile, with possibly negative values, is at least
$\gm_c$.  
Hence, an optimal algorithm for the \cm problem can be used to solve \eqref{dual}, and
hence, \eqref{primal}, efficiently. 

\begin{theorem} \label{optcm}
With additive types, one can efficiently solve \eqref{primal} given an optimal algorithm
for the \cm problem. 
\end{theorem}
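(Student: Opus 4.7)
The plan is to solve \eqref{primal} via the ellipsoid method applied to its dual \eqref{dual}. The primal has potentially exponentially many variables (one per $(c,\w)$), but \eqref{dual} has only polynomially many variables (indexed by $c$, pairs $c_i,c'_i\in\bD_i$, and $c\in\bD$). Constraints \eqref{e6} and \eqref{e7} are polynomial in number and trivially separable, so everything reduces to a polynomial-time separation oracle for the family \eqref{e5}, which has one inequality per $(c,\w)$ with $\w\in\Om(c)$. Once such an oracle is in hand, ellipsoid solves \eqref{dual}, and we then recover an optimum primal solution by restricting \eqref{primal} to the polynomially many $\w$'s returned by the oracle during the run and solving the resulting compact LP.

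For the separation oracle, fix $c\in\bD$ and any dual point $(\gm,y,\be)$, and consider the constraint \eqref{e5} as $\w$ ranges over $\Om(c)$. Grouping all terms involving player $i$ and using additivity of $C_i$, we can define, for each $i$ with $c\in\bD_i\times\D_{-i}$, an effective cost $\tc_i$ on $\T_i$ by setting, for every $v\in\T_i$,
\begin{equation*}
\tc_{i,v} \;=\; c_{i,v}\Bigl(\be_{i,c}+\sum_{c'_i\in\bD_i}y_{i,(c_i,c_{-i}),c'_i}\Bigr)
\;-\;\sum_{c'_i\in\bD_i}c'_{i,v}\,y_{i,(c'_i,c_{-i}),c_i}.
\end{equation*}
Because types are additive, $\tc_i(\w)=\sum_{v\in\w_i}\tc_{i,v}$, so the LHS of \eqref{e5} becomes exactly $\sum_i\tc_i(\w)+\kp\cdot{\textstyle\Pr_\D}(c)\cdot\pub(\w)$. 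Thus the most-violated constraint at $c$ is found by minimizing $\sum_i\tc_i(\w)+\kp\cdot{\textstyle\Pr_\D}(c)\cdot\pub(\w)$ over $\w\in\Om(c)$ and comparing with $\gm_c$. This is precisely an instance of the \cm problem on the given set system (with $\pub$ scaled by $\kp\cdot{\textstyle\Pr_\D}(c)$, which can be absorbed), so a single call to the \cm algorithm $\A$ suffices. Ranging over the polynomially many choices of $c$ yields the full oracle.

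Two small wrinkles need to be addressed. First, when $c=(m_i\bon_{\T_i},c_{-i})$, we have $\Om(c)=\{\w\in\Om:\w_i=\es\}$ rather than all of $\Om$; we handle this by invoking $\A$ on the instance in which player $i$'s set $\T_i$ is deleted (equivalently, his costs forced to $+\infty$), which by the monotonicity assumption on $\pub$ still has a feasible solution in the monopoly-free setting. Second, the effective costs $\tc_{i,v}$ constructed above may be negative, so we do rely on $\A$ being an exact algorithm that handles arbitrary real-valued input costs; this is the only place where the ``exact'' (as opposed to LP-relative approximate) hypothesis is used, and it is also the only nontrivial step in the argument, since approximation algorithms generally break down under negative costs for covering problems.

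Putting the pieces together: ellipsoid on \eqref{dual} with the oracle above runs in polynomial time and produces an optimal dual solution together with the polynomial-size family $\mathcal{F}\subseteq\bigcup_{c}\Om(c)$ of allocations it ever consulted. Solving \eqref{primal} with the variables $x_{c,\w}$ restricted to $(c,\w)$ with $\w\in\mathcal{F}\cap\Om(c)$ is a compact LP whose optimum equals the optimum of \eqref{dual}, hence of \eqref{primal}, and this yields an optimal primal solution in polynomial time, proving the theorem.
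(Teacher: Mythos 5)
Your overall strategy is the same as the paper's: treat \eqref{dual} by the ellipsoid method, build a separation oracle for \eqref{e5} by casting each constraint as a \cm instance with effective costs $\tc_{i,v}$ (your formula matches the paper's $\tht^c_{i,v}$), and then recover an optimal primal solution from the polynomially many constraints the oracle returned. Your handling of the type-profiles $c=(m_i\bon_{\T_i},c_{-i})$ by restricting to allocations with $\w_i=\es$ is a minor variant of the paper's trick (the paper assigns those objects a cost of $\gm_c+1$ rather than deleting them) and both work.

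However, there is a genuine gap where you dismiss the negative-cost issue. The theorem grants access only to an optimal algorithm for the \cm problem, and the \cm problem is defined with \emph{nonnegative} type vectors ($c_{i,v}\geq 0$). You write that you ``rely on $\A$ being an exact algorithm that handles arbitrary real-valued input costs,'' but that is a strictly stronger hypothesis than the one you are given; an exact \cm algorithm need not accept or be correct on negative inputs, and indeed its correctness proof may use nonnegativity in essential ways. The paper supplies the missing reduction: because $\pub(\cdot)$ is monotone (adding covering objects cannot increase the public cost), any optimal solution $\w^*$ to the negative-cost instance must include every negative-cost object, i.e.\ $A_i:=\{v\in\T_i:\tc_{i,v}<0\}\sse\w^*_i$. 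One therefore runs $\A$ on the truncated nonnegative costs $\tc^+_{i,v}:=\max(0,\tc_{i,v})$ (with careful scaling to trade off against the $\kp\cdot\pub$ term), obtains $(S_1,\ldots,S_n)$, and returns $\w_i:=A_i\cup S_i$. Without this step your separation oracle is not actually realizable from the black box the theorem provides, so the argument as written does not establish the claim.
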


\begin{proof} 
Let $\A$ be an optimal algorithm for the \cm problem. 
Note that $\A$ is only required to work with nonnegative inputs.
We first observe that we can use $\A$ to find a solution that minimizes 
$\sum_ic_i(\w)+\kp\cdot\pub(\w)$ for any $\kp\geq 0$, even for an input
$c=\{c_{i,v}\}_{i,v\in\T_i}$ where some of the $c_{i,v}$s are negative.   
Let $A_i=\{v\in\T_i: c_{i,v}<0\}$. 
Clearly, if $\w^*$ is an optimal solution, then $A_i\sse\w^*_i$ (since $\pub(.)$ does not
increase upon adding covering objects). 
Define $c^+_{i,v}:=\max(0,c_{i,v})$ and $c_i^+:=\{c^+_{i,v}\}_{v\in\T_i}$.

Let $\Gm=\frac{1}{\kp}$ if $\kp>0$; otherwise let $\Gm=NU$, where
$U$ is a strict upper bound on $\max_{\w\in\Om}\pub(\w)$ and 
$N$ is an integer such that all the $c^+_{i,v}$s are integer multiples of $\frac{1}{N}$. 
Note that for any $\w,\w'\in\Om$, if $\sum_ic^+_i(\w)-\sum_ic^+_i(\w')$ is non-zero, then
its absolute value is at least $\frac{1}{N}$.
Also, $U$ and $N$ may be efficiently computed (for rational data) and
$\log(NU)$ is polynomially bounded.
Let $(S_1,\ldots,S_n)$ be the solution returned by $\A$ for the \cm problem on the input 
where all the $c^+_{i,v}$s are scaled by $\Gm$. The choice of $\Gm$ ensures that 
$$ 
\sum_i c_i^+(S_i)+\kp\cdot\pub\bigl(S_1,\ldots,S_n\bigr)
\leq\sum_i c_i^+(\w^*_i)+\kp\cdot\pub(\w^*)
=\sum_i\bigl(c_i(\w^*_i)-c_i(A_i)\bigr)+\kp\cdot\pub(\w^*). 
$$
(The first inequality clearly holds if $\kp>0$. If $\kp=0$ and
$\sum_i c_i^+(S_i)>\sum_i c_i^+(\w^*_i)$, then we have that
$\Gm\sum_i c_i^+(S_i)\geq\Gm\sum_i c_i^+(\w^*_i)+\frac{\Gm}{N}
>\Gm\sum_i c_i^+(\w^*_i)+\pub(\w^*)$, which contradicts the optimality of
$(S_1,\ldots,S_n)$ for the input $\{\Gm c^+_{i,v}\}_{i,v\in\T_i}$.)
So setting $\w'_i=A_i\cup S_i$ for every $i$ yields a feasible solution such that 
$\sum_ic_i(\w')+\kp\cdot\pub(\w')\leq\sum_i c_i(\w^*)+\kp\cdot\pub(\w^*)$; hence $\w'$ is 
an optimal solution.  

\medskip
Given a dual solution $(y,\beta,\gm)$, we can easily check if \eqref{e6}, \eqref{e7}
hold. Fix $c\in\bD$ and player $i$. 
Since we have additive types, if we define
$\tht^c_{i,v}=\sum_{c'_i\in\bD_i}\bigl(c_{i,v}y_{i,(c_i,c_{-i}),c'_i}-c'_{i,v}y_{i,(c'_i,c_{-i}),c_i}\bigr)+c_{i,v}\be_{i,c}$, 
then for every $\w\in\Om$, we can equate
$\sum_{c'_i\in\bD_i}\bigl(c_i(\w)y_{i,(c_i,c_{-i}),c'_i}-c'_i(\w)y_{i,(c'_i,c_{-i}),c_i}\bigr)+c_i(\w)\be_{i,c}$
with $\tht^c_i(\w):=\sum_{v\in\w_i}\tht^c_{i,v}$.

Let $I=\{i:c\in\bD_i\times\D_{-i}\}$. Constraints \eqref{e5} for $c$ can then be written
as $\min_{\w\in\Om(c)}\bigl(\sum_{i\in I}\tht^c(\w)+\kp\Pr_{\D}(c)\pub(\w)\bigr)\geq\gm_c$. 
Define $\tc$ as follows: 
$$
\tc_{i,v}=\begin{cases}
\gm_c+1 & \text{if $c_i=m_i\bon_{\T_i}$}, \\
\tht^c_{i,v} & \text{if $i\in I,\ c_i\in\D_i$}, \\ 
0 & \text{otherwise}. 
\end{cases}
$$
It is easy to see that \eqref{e5} holds for $c$ iff 
$\min_{\w\in\Om}\bigl(\sum_i\tc_i(\w)+\kp\cdot\Pr_{\D}(c)\pub(\w)\bigr)$%
---which can be computed using $\A$---is at least $\gm_c$. Thus, we can use the
ellipsoid method to solve \eqref{dual}. This also yields a compact dual consisting of
constraints \eqref{e6}, \eqref{e7} and the polynomially-many \eqref{e5} constraints that
were returned by the separation oracle during the execution of the ellipsoid method, whose
optimal value is $\OPT_{\text{\ref{dual}}}$. The dual of this compact dual is an LP of the
same form as \eqref{primal} but with polynomially many $x_{c,\w}$-variables; solving this
yields an optimal solution to \eqref{primal}.
\end{proof}

\noindent
Complementing Theorem~\ref{optcm}, we argue that a feasible solution $(x,p)$ to
\eqref{primal} can be extended to a support-based-(IC-in-expectation, IR) mechanism
having expected disutility at most the value of $(x,p)$ (Theorem~\ref{multiround}).
Combining this with Theorem~\ref{optcm} yields the corollary that an optimal algorithm for 
the \cm problem can be used to obtain an optimal mechanism for the \paym problem
(Corollary~\ref{bicredn1}).  

\begin{theorem} \label{multiround}
We can extend a feasible solution $(x,p)$ to \eqref{primal} 
to a support-based-(IC-in-expectation, IR) mechanism with expected disutility
$\sum_{c}\Pr_\D(c)\bigl(\sum_i p_{i,c}+\kp\sum_{\w}x_{c,\w}\pub(\w)\bigr)$. 
\end{theorem}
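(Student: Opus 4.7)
The plan is to extend $(x,p)$ to a mechanism $M=(\A,\{p_i\})$ on all of $C$ via a ``best-response'' redirection of out-of-support reports, and then verify that robust-BIC-in-expectation, robust IR, and the disutility bound all reduce to the LP constraints \eqref{e3}, \eqref{e4} together with the support restriction on $\Om(c)$.

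First I would define $M$ on input $v\in C$ as follows. If $v\in\bD$, the mechanism outputs allocation $\w$ with probability $x_{v,\w}$ and pays $p_{i,v}$ to each player $i$. If $v\notin\bD$, I check whether there is an index $i^*$ with $v_{-i^*}\in\D_{-i^*}$ and $v_{i^*}\notin\bD_{i^*}$; if so, I set
\[
c^*\in\arg\max_{c\in\bD_{i^*}}\Bigl\{p_{i^*,(c,v_{-i^*})}-\sum_{\w}v_{i^*}(\w)\,x_{(c,v_{-i^*}),\w}\Bigr\}
\]
and copy the LP solution at $(c^*,v_{-i^*})\in\bD$. In the only remaining case I output any feasible allocation with zero payments. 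The index $i^*$ is uniquely determined when it exists: for any $j\ne i^*$ with $v_{-j}\in\D_{-j}$ we would need $v_{i^*}\in\D_{i^*}\sse\bD_{i^*}$, contradicting $v_{i^*}\notin\bD_{i^*}$. So $M$ is well-defined.

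Next I would verify robust IR and robust-BIC-in-expectation. Fix a player $i$, true type $\bc_i\in C_i$, report $c_i\in C_i$, and $c_{-i}\in\D_{-i}$; for any $c\in C_i$ let $c^*(c)\in\bD_i$ denote the surrogate used by $M$ on $(c,c_{-i})$ (so $c^*(c)=c$ when $c\in\bD_i$, and the $\arg\max$ above otherwise). Then
\[
u_i(c_i,c_{-i};\bc_i)=p_{i,(c^*(c_i),c_{-i})}-\sum_\w \bc_i(\w)\,x_{(c^*(c_i),c_{-i}),\w}\;\le\;\max_{c\in\bD_i}\Bigl\{p_{i,(c,c_{-i})}-\sum_\w\bc_i(\w)\,x_{(c,c_{-i}),\w}\Bigr\},
\]
and the same quantity at $c_i=\bc_i$ attains this maximum: when $\bc_i\notin\bD_i$, this holds by the definition of $c^*(\bc_i)$; when $\bc_i\in\bD_i$, constraint \eqref{e3} guarantees the maximum is attained at $c=\bc_i$. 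This is robust BIC. For robust IR, plug $c=m_i\bon_{\T_i}$ into the maximum; since $\Om((m_i\bon_{\T_i},c_{-i}))\sse\{\w:\w_i=\es\}$ the cost term vanishes, leaving $p_{i,(m_i\bon_{\T_i},c_{-i})}\ge 0$ by \eqref{e4}.

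Finally, every $v\in\D$ satisfies $v_i\in\D_i\sse\bD_i$ for all $i$ and hence $v\in\bD$, so $M$ reproduces the LP solution verbatim on the support of $\D$; therefore its expected disutility equals $\sum_{c\in\D}\Pr_\D(c)\bigl(\sum_i p_{i,c}+\kp\sum_\w x_{c,\w}\pub(\w)\bigr)$, as required. The principal obstacle is that the LP only disciplines the mechanism on the finite set $\bD$ while robust BIC must cover arbitrary true and reported types in $C_i$; the best-response extension resolves this cleanly by reducing every deviation to an LP-disciplined surrogate in $\bD_i$.
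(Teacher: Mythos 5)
Your proposal is correct and follows essentially the same approach as the paper's proof: extend the LP solution to out-of-support reports via a best-response/surrogate redirection to a type in $\bD_i$, then read off robust BIC from \eqref{e3} and the argmax choice, and robust IR by evaluating the surrogate at $m_i\bon_{\T_i}$ (where the cost term vanishes by the $\Om(c)$ restriction). The paper packages the surrogate in terms of the induced fractional-allocation vector $x_{(c_i,c_{-i})}$ and the set $\range(i,c_{-i})$, but this is the same construction; your choice of zero payments in the degenerate case (no $i$ with $v_{-i}\in\D_{-i}$) is harmless since such profiles never enter the robust-(BIC, IR) constraints.
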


\begin{proof}
Let $\Om'=\{\w: x_{c,\w}>0\text{ for some }c\in\bD\}$.
We use $x_c$ to denote the vector $\{x_{c,\w}\}_{\w\in\Om'}$.
Consider a player $i$, $c_{-i}\in\D_{-i}$, and $c_i,c'_i\in\bD_i$. 
Note that \eqref{e3} implies that if
$x_{(c_i,c_{-i})}=x_{(c'_i,c_{-i})}$, then $p_{i,(c_i,c_{-i})}=p_{i,(c'_i,c_{-i})}$.
For $c_{-i}\in\D_{-i}$, define 
$\range(i,c_{-i})=\bigl\{x_{(c_i,c_{-i})}: (c_i,c_{-i})\in\bD\bigr\}$, and for
$y=x_{(c_i,c_{-i})}\in\range(i,c_{-i})$ define $p_{i,y}$ to be $p_{i,(c_i,c_{-i})}$ (which
is well defined by the above argument).

We now define the randomized mechanism $M=\bigl(\A,\{q_i\}\bigr)$, where $\A(c)$ and
$q_i(c)$ denote respectively the probability distribution over allocations and the
expected payment to player $i$, on input $c$. We sometimes view $\A(c)$ equivalently as
the random variable specifying the allocation chosen for input $c$.
Fix an allocation $\w_0\in\Om$. 
Consider an input $c$. If $c\in\bD$, we set $\A(c)=x_c$, and $q_i(c)=p_{i,c}$ for all
$i$. 
So consider $c\notin\bD$. 
If there is no $i$ such that $c_{-i}\in\D_{-i}$,
we simply set $\A(c)=\w_0$, $q_i(c)=c_i(\w_0)$ for all $i$; such a $c$ does not figure in
the support-based (IC, IR) conditions. 
Otherwise there is a unique $i$ such that 
$c_{-i}\in\D_{-i},\ c_i\in C_i\sm\bD_i$. Set
$\A(c)=\arg\max_{y\in\range(i,c_{-i})}\bigl(p_{i,y}-\sum_{\w\in\Om'}c_i(\w)y_\w\bigr)$ and 
$q_j(c)=p_{j,\A(c)}$ for all players $j$. Note that $(c_i,c_{-i})$ figures in \eqref{e3}
{\em only} for player $i$.
Crucially, note that since $y=x_{(m_i,c_{-i})}\in\range(i,c_{-i})$ and
$\sum_{\w\in\Om}c_i(\w)y_\w=0$ by definition, 
we always have $q_i(c)-\E_\A[c_i(\A(c))]\geq 0$.
Thus, by definition, and by \eqref{e3}, we have ensured that $M$ is 
support-based (IC, IR)-in-expectation and its expected disutility is exactly the value of
$(x,p)$. 
This can be modified so that IR holds with probability 1.  

The above procedure is efficient if 
$\sum_{\w\in\Om'}c_i(\w)x_{c,\w}$ can be calculated efficiently. This is clearly true if
$|\Om'|$ is polynomially bounded, but it could hold under weaker conditions as well.
\end{proof}

\begin{corollary} \label{bicredn1}
Given an optimal algorithm for the \cm problem, we can efficiently obtain an optimal
support-based-(IC-in-expectation, IR) mechanism for the \paym problem in multidimensional
settings with additive types. 
\end{corollary}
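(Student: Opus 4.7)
The plan is to chain together Theorems~\ref{optcm} and~\ref{multiround} with Lemma~\ref{estim}. First, I would apply Theorem~\ref{optcm} to compute, in polynomial time, an optimal solution $(x^*, p^*)$ to \eqref{primal} using the given exact algorithm for the \cm problem. Second, I would apply Theorem~\ref{multiround} to $(x^*, p^*)$ to obtain a polytime robust-(BIC-in-expectation, IR) mechanism $M$ whose expected disutility equals the objective value of $(x^*, p^*)$, namely $\OPT_{\eqref{primal}}$.

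It remains to show that $M$ is optimal, i.e., that $\OPT_{\eqref{primal}} \leq \iopt$, where $\iopt$ is the minimum expected disutility among robust-(BIC-in-expectation, IR) mechanisms. Here I would invoke Lemma~\ref{estim}: it yields an optimal such mechanism $M^* = (\A^*, \{p^*_i\})$ that, on every input $(m_i\bon_{\T_i}, c_{-i})$ with $c_{-i}\in\D_{-i}$, chooses an allocation $\w\in\Om$ with $\w_i=\es$ with probability $1$. I would then encode $M^*$ as a feasible LP solution by setting $x_{c,\w} := \Pr_{M^*}[\A^*(c)=\w]$ and $p_{i,c} := \E_{M^*}[p^*_i(c)]$ for $c\in\bD$. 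Constraint \eqref{e2} is immediate since $\A^*(c)$ is a distribution over $\Om$; constraints \eqref{e3} and \eqref{e4} translate directly into the in-expectation robust-BIC and robust-IR inequalities satisfied by $M^*$ (noting that $\bD_i = \D_i\cup\{m_i\bon_{\T_i}\}$, so both ``real'' reports and the dummy $m_i\bon_{\T_i}$ report are covered by these conditions); and \eqref{e45}, specifically $x_{c,\w}=0$ for $\w\notin\Om(c)$, is exactly the support guarantee of Lemma~\ref{estim}. The objective value of this feasible solution equals the expected disutility of $M^*$, namely $\iopt$, so $\OPT_{\eqref{primal}} \leq \iopt$; combined with the upper bound from the previous paragraph, $M$ achieves expected disutility $\iopt$ and is optimal.

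The main subtlety---rather than obstacle, since the heavy lifting sits in Theorems~\ref{optcm} and~\ref{multiround}---is the role of the augmented domain $\bD$ and the dummy inputs $m_i\bon_{\T_i}$ supplied by Lemma~\ref{estim}. Without these, constraint \eqref{e4} could not be written with finite payment variables and \eqref{primal} would not correctly encode \paym; with them, the LP encoding is tight and the feasibility check above goes through cleanly. Efficiency of the overall construction is preserved because Theorem~\ref{optcm} produces $x^*$ via a compact LP with polynomially many $x_{c,\w}$-variables, which in particular means $|\{\w:x^*_{c,\w}>0\}|$ is polynomially bounded and the rounding step required by the efficiency remark at the end of the proof of Theorem~\ref{multiround} is met.
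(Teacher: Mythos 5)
Your proposal is correct and follows exactly the route the paper intends: Theorem~\ref{optcm} solves \eqref{primal}, Theorem~\ref{multiround} extracts a mechanism matching the LP value, and Lemma~\ref{estim} is what guarantees $\OPT_{\eqref{primal}}\le\iopt$ by letting the optimal mechanism $M^*$ be encoded as a feasible LP solution (the paper states this step only in prose, ``Lemma~\ref{estim} ensures that \eqref{primal} correctly encodes \paym''). Your added detail about verifying \eqref{e2}--\eqref{e45} for the encoding of $M^*$, and the efficiency remark on the polynomial support of $x^*$, are exactly the checks the paper leaves implicit.
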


\vspace{-1ex}
\paragraph{Using approximation algorithms.}
The \cm problem is however often \nphard (e.g., for vertex cover), and we would like to
be able to exploit approximation algorithms for the \cm problem 
to obtain near-optimal mechanisms.  
The usual approach is to use an approximation algorithm 
to ``approximately'' separate over constraints \eqref{e5}.%
\footnote{For revenue-maximization problems in packing domains, this simple approach 
{\em does} indeed work in single-dimensional settings and settings with additive types. 
This is because the dual separation problem is now an SWM problem, and 
it is easy to use a $\rho$-approximation algorithm for the SWM problem 
for nonnegative inputs to obtain a $\rho$-approximate 
solution with {\em arbitrary}, positive or negative, inputs.
This yields a simple extension of some of the results in~\cite{DobzinskiFK11}; see
Appendix~\ref{append-packextn}.}
However, this does not work here since the \cm problem that one needs to solve 
in the separation problem involves negative costs, which renders
the usual notion of approximation meaningless.
Instead, if the \cm problem admits a certain type of LP-relaxation \eqref{cmlp}, then we 
argue that one can solve a relaxation of \eqref{primal} where the allocation-set is
the set of extreme points of \eqref{cmlp} (Theorem~\ref{optnewlp}).  
For single-dimensional problems (Section~\ref{single}), we leverage this to obtain strong
and far-reaching results. 
We show that a $\rho$-approximation algorithm relative to
\eqref{cmlp} 
can be used to ``round'' the optimal solution to this relaxation to a
support-based-(IC-in-expectation, IR)-mechanism losing a $\rho$-factor in the disutility
(Theorem~\ref{bicredn2}). Thus, we obtain near-optimal mechanisms for a variety of
single-dimensional problems.  

Suppose that the \cm problem admits an LP-relaxation of the following form, where
$c=\{c_{i,v}\}_{i\in[n],v\in\T_i}$ is the input type-profile. 
\begin{equation}
\min \quad c^Tx+d^Tz \qquad
\text{s.t.} \qquad Ax+Bz\geq b, \quad x, z\geq 0. 
\tag{C-P} \label{cmlp}
\end{equation}
Intuitively $x$ encodes the allocation chosen, 
and $d^Tz$ encodes $\pub(.)$. 
For $x\geq 0$, define 
$z(x):=\arg\min\{d^Tz: (x,z)\text{ is feasible to \eqref{cmlp}}\}$; if there is no $z$
such that $(x,z)$ is feasible to \eqref{cmlp}, set $z(x):=\bot$.
Define $\Om_\lp:=\{x: z(x)\neq\bot,\ \ 0\leq x_{i,v}\leq 1\ \forall i, v\in\T_i\}$. 
We require that: 
(a) a $\{0,1\}$-vector $x$ is in $\Om_\lp$ iff it is the characteristic vector of an
allocation $\w\in\Om$, and in this case, we have $d^Tz(x)=\pub(\w)$;
(b) $A\geq 0$;  
(c) for any input $c\geq 0$ to the covering problem, \eqref{cmlp} is not unbounded, and if it
has an optimal solution, it has one where $x\in\Om_\lp$;
(d) for any $c$, we can efficiently find an optimal solution to \eqref{cmlp} or detect
that it is unbounded or infeasible.

\newcommand{\optcmlp}{\ensuremath{\OPT_{\text{\ref{cmlp}}}}\xspace}

We extend the type $c_i$ of each player $i$ and $\pub$ to assign values also to points in 
$\Om_\lp$: define $c_i(x)=\sum_{v\in\T_i}c_{i,v}x_{i,v}$ and $\pub(x)=d^Tz(x)$ for
$x\in\Om_\lp$. 
Let $\Om_\ex$ denote the finite set of extreme points of $\Om_\lp$. Condition (a)
ensures that $\Om_\ex$ contains the characteristic vectors of all feasible allocations.
Let \newlp denote the relaxation of \eqref{primal}, where we replace the set of feasible
allocations $\Om$ with $\Om_\ext$ (so $\w$ indexes $\Om_\ext$ now), and for $c\in\bD$ with
$c_i=m_i\bon(\T_i)$, we now define $\Om(c):=\{\w\in\Om_\ext:\w_{i,v}=0\ \forall v\in\T_i\}$. 
Since one can optimize efficiently over $\Om_\lp$, and hence $\Om_\ext$, even for negative
type-profiles, we have the following.

\begin{theorem} \label{optnewlp}
We can efficiently compute an optimal solution to \newlp.
\end{theorem}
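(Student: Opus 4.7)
The plan is to mimic the argument of Theorem~\ref{optcm}: write down the dual of \newlp---which has the same shape as \eqref{dual} except that $\w$ ranges over $\Om_\ext$ (and $\Om(c)$ is redefined to be the extreme points of $\Om_\lp\cap\{x:x_{i,v}=0\ \forall v\in\T_i\}$ when $c_i=m_i\bon_{\T_i}$)---and solve it by the ellipsoid method. Constraints \eqref{e6} and \eqref{e7} are polynomial in number, so the only real work is to separate the exponentially-many \eqref{e5} constraints; once that oracle is in place, the standard compact-dual argument of Theorem~\ref{optcm} produces a polynomial-size LP whose optimal solution is an optimal solution to \newlp.

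For the separation oracle I reuse the template of Theorem~\ref{optcm}: given a candidate $(y,\be,\gm)$, for each $c\in\bD$ the family of \eqref{e5}-constraints indexed by $\w$ amounts to requiring
\[
\min_{\w\in\Om_\ext(c)}\Bigl(\sum_i\tc_i(\w)+\kp\cdot{\textstyle \Pr_\D}(c)\cdot\pub(\w)\Bigr)\geq\gm_c,
\]
with the $\tc_{i,v}$ exactly the affine functions of $y,\be$ defined there (inflated to $\gm_c+1$ on coordinates with $c_i=m_i\bon_{\T_i}$, which forces $\w_{i,v}=0$ at the optimum). The crucial change is this: because $\pub(\w)=d^Tz(\w)$ and $\Om_\ext$ is the extreme-point set of the polytope $\Om_\lp$, this discrete minimum coincides with the optimum of the linear program
\[
\min\ \tc^Tx+\kp\cdot{\textstyle \Pr_\D}(c)\cdot d^Tz\qquad\text{s.t.}\qquad Ax+Bz\geq b,\ 0\leq x\leq\bon,\ z\geq 0,
\]
which is essentially of the shape of \eqref{cmlp}. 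Condition~(d) supplies an efficient oracle for \eqref{cmlp} with any cost, and the remaining twists---the scaling factor $\kp\Pr_\D(c)$ in front of $d^Tz$, the upper bound $x\leq\bon$, and the possibly-negative entries of $\tc$---are absorbed by the same device used in Theorem~\ref{optcm}: coordinates of $\tc$ with negative sign can be pre-set to $x_{i,v}=1$ (which is optimal because $A\geq 0$ means making $x$ larger only helps satisfy the constraints), and the $\tc/d$ scaling is handled exactly as there.

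Equipped with this polynomial-time oracle, the ellipsoid method solves the dual of \newlp in polynomial time, retaining only a polynomial number of \eqref{e5}-constraints. Taking the LP dual of this compact dual produces an LP of the same form as \newlp but with $x_{c,\w}$-variables only for the retained $(c,\w)$ pairs---hence of polynomial size---and solving it directly yields an optimal solution to \newlp. The main point worth verifying carefully is that the LP-relaxed minimum really does agree with the extreme-point minimum over $\Om_\ext(c)$ in the restricted case $c=(m_i\bon_{\T_i},c_{-i})$: this holds because the extreme points of a face of $\Om_\lp$ are also extreme points of $\Om_\lp$, and monopoly-freeness guarantees that the relevant face is non-empty.
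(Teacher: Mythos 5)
Your proposal matches the paper's intended argument: the paper states Theorem~\ref{optnewlp} without a proof, offering only the one-line justification that one can optimize efficiently over $\Om_\lp$ (and hence $\Om_\ext$) even for negative type-profiles, and your proof is precisely the fleshed-out version of that remark---dualize \newlp, separate the \eqref{e5} constraints for each $c$ by solving the LP over $\Om_\lp$ (i.e.\ \eqref{cmlp} with the box constraint $0\leq x\leq\bon$, which is always bounded), then run the ellipsoid method and recover a polynomial-size compact primal exactly as in Theorem~\ref{optcm}. Your handling of the restricted case $c_i=m_i\bon_{\T_i}$ via faces of $\Om_\lp$, and of negative $\tc$-coordinates via the sign assumption $A\geq 0$, are the natural adaptations of the corresponding steps in Theorem~\ref{optcm} and are consistent with what the paper leaves implicit.
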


\section{Single-dimensional problems} \label{single}
Corollary~\ref{bicredn1} immediately yields results for certain single-dimensional
problems (see Table~\ref{onedapps}), most notably, {\em single-item procurement auctions}.
We now substantially expand the scope of \paym problems for which one can
obtain near-optimal mechanisms by showing how to leverage ``LP-relative'' approximation 
algorithms for the \cm problem. 
(As noted earlier, and sketched in Appendix~\ref{append-packextn}, a simpler approach can
be used to leverage approximation algorithms for revenue-maximization in packing domains.) 
Suppose that the \cm problem can be encoded as \eqref{cmlp}. 
An {\em LP-relative $\rho$-approximation algorithm} for the \cm problem is a polytime
algorithm that for any input $c\geq 0$ to the covering problem, returns 
a $\{0,1\}$-vector $x\in\Om_\lp$ 
such that $c^Tx+d^Tz(x)\leq\rho\optcmlp$.   
Using the convex-decomposition procedure in~\cite{LaviS11} (see Section 5.1
of~\cite{LaviS11}), one can show the following; the proof appears at the end of this
section. 

\begin{lemma} \label{cvxdec1}
Let $x\in\Om_\lp$. Given an LP-relative $\rho$-approximation algorithm $\A$ for the \cm 
problem, one can efficiently obtain $(\ld^{(1)}, x^{(1)}),\ldots,
(\ld^{(k)}, x^{(k)})$, where $\sum_\ell \ld^{(\ell)}=1, \ld\geq 0$, and 
$x^{(\ell)}$ is a $\{0,1\}$-vector in $\Om_\lp$ for all $\ell$, such that 
$\sum_\ell\ld^{(\ell)} x^{(\ell)}_{i,v}=\min(\rho x_{i,v},1)$ for all $i,v\in\T_i$, and
$\sum_\ell\ld^{(\ell)} d^Tz(x^{(\ell)})\leq\rho d^Tz(x)$.  
\end{lemma}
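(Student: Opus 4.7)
The plan is to follow the Lavi--Swamy convex-decomposition paradigm, adapted to the covering setting by exploiting the monotonicity $\pub(\w')\le\pub(\w)$ when $\w_i\sse\w'_i$ (which translates, via $A\ge 0$, to $d^Tz(\hat x')\le d^Tz(\hat x)$ whenever $\hat x'\ge\hat x$ in $\Om_\lp$). Let $x'$ denote the coordinate-wise cap $x'_{i,v}=\min(\rho x_{i,v},1)$, and $J=\{(i,v):\rho x_{i,v}\ge 1\}$, so that $x'$ equals $\rho x$ off $J$ and $1$ on $J$. The goal is to write $x'$ as a convex combination of integer allocations in $\Om_\lp$ whose $d^Tz$-values average to at most $\rho\cdot d^Tz(x)$. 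We set up an LP with one variable $\ld_\ell\ge 0$ per integer allocation $x^{(\ell)}\in\Om_\lp$ enforcing $\sum_\ell\ld_\ell=1$, $\sum_\ell\ld_\ell x^{(\ell)}\le x'$ coordinate-wise, and $\sum_\ell\ld_\ell d^Tz(x^{(\ell)})\le\rho\cdot d^Tz(x)$. By LP duality, this LP is feasible iff for every $(\alpha,\mu)\ge 0$ satisfying $\alpha^Tx^{(\ell)}+\mu\cdot d^Tz(x^{(\ell)})\ge 1$ for all integer $x^{(\ell)}\in\Om_\lp$, one has $\alpha^Tx'+\rho\mu\cdot d^Tz(x)\ge 1$.

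To establish this dual inequality, we invoke the approximation algorithm on a carefully chosen nonnegative cost vector. Fix such a dual-feasible $(\alpha,\mu)$; we may assume $\mu>0$ and recover the $\mu=0$ case by continuity (the dual constraints remain satisfied, and the dual objective is non-decreasing in $\mu$). Define $c_{i,v}=\alpha_{i,v}/\mu$ for $(i,v)\notin J$ and $c_{i,v}=0$ for $(i,v)\in J$, and let $\hat x$ be the integer solution returned by the LP-relative $\rho$-approximation on input $c$. Monotonicity (condition (b), i.e.\ $A\ge 0$) lets us pad $\hat x$ to $\hat x'$ by forcing $\hat x'_{i,v}=1$ on $J$; this keeps $\hat x'$ an integer point of $\Om_\lp$ with $d^Tz(\hat x')\le d^Tz(\hat x)$, while $c^T\hat x'=c^T\hat x$ since $c$ vanishes on $J$. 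Dual feasibility at $\hat x'$, the approximation bound $c^T\hat x+d^Tz(\hat x)\le\rho(c^Tx+d^Tz(x))$, and the identities $x'_{i,v}=1$ on $J$, $x'_{i,v}=\rho x_{i,v}$ off $J$, then chain together as
\[
1\;\le\;\alpha^T\hat x'+\mu\cdot d^Tz(\hat x')
\;\le\;\sum_{(i,v)\in J}\alpha_{i,v}+\mu\bigl(c^T\hat x+d^Tz(\hat x)\bigr)
\;\le\;\alpha^Tx'+\rho\mu\cdot d^Tz(x),
\]
which is the desired bound.

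This same construction doubles as an approximate separation oracle for the dual, exhibiting a violated constraint whenever the current objective would drop below $1$, so we can run the ellipsoid method to extract polynomially many supporting constraints and thereby obtain a polynomial-size primal solution with $\sum_\ell\ld_\ell=1$ and $\sum_\ell\ld_\ell x^{(\ell)}\le x'$. To upgrade this to the equality $\sum_\ell\ld_\ell x^{(\ell)}_{i,v}=x'_{i,v}$ asserted in the lemma, we again appeal to coverage monotonicity: we pad supporting allocations along coordinates where the convex combination falls short of $x'_{i,v}$, which only lowers their $d^Tz$-values and so preserves the $\rho\cdot d^Tz(x)$ bound. The hard part will be the asymmetric handling of the saturated coordinates $J$ in the dual argument: zeroing the cost on $J$ and then repairing the algorithm's output by monotonicity is precisely where the covering setting diverges from the packing Lavi--Swamy argument, and it is what permits the approximation algorithm to be required only for nonnegative inputs.
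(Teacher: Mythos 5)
Your proof is correct and follows essentially the same Lavi--Swamy convex-decomposition template as the paper: set up a decomposition LP, pass to the dual, zero out the cost on the ``saturated'' coordinates so that the input to the $\rho$-approximation algorithm is nonnegative, and then use the monotonicity of $\Om_\lp$ (via $A\geq 0$) to pad the returned allocation back up. The only substantive difference is one of bookkeeping: the paper writes the decomposition constraint as an \emph{equality} $\sum_\ell\ld^{(\ell)}x^{(\ell)}_{i,v}=\min(\rho x_{i,v},1)$, which makes the dual multiplier $\al$ free in sign, so their argument must also zero the cost and pad on coordinates where $\hal_{i,v}<0$; you instead impose the inequality $\sum_\ell\ld_\ell x^{(\ell)}\leq x'$, giving $\al\geq 0$ in the dual, which cleanly removes the negative-sign case, at the cost of a final padding pass to turn the inequality into the required equality. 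Both versions trade the same coin --- one does the extra padding inside the dual-separation argument, the other defers it to a post-processing step --- and both land on the same conclusion; just make sure, when you formalize the final padding pass, to split at most one support allocation per coordinate so the description stays polynomial-size, and to make the $\mu=0$ limiting argument explicit (it works because $d^Tz(x^{(\ell)})=\pub(\w^{(\ell)})\geq 0$ for every integral allocation).
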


\begin{theorem} \label{bicredn2}
Given an LP-relative $\rho$-approximation algorithm for the \cm problem, one can obtain a 
polytime $\rho$-approximation support-based-(IC-in-expectation, IR) mechanism for the \paym
problem. 
\end{theorem}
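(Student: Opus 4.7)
The plan is to use the Lavi--Swamy convex decomposition (Lemma~\ref{cvxdec1}) to round the optimal fractional solution of \newlp to an integer allocation, and then regenerate payments from scratch via single-dimensional Myerson theory; naively scaling the LP payments $p^*$ by $\rho$ fails because the cap at~$1$ in Lemma~\ref{cvxdec1} destroys the linear pairing between allocation and payment that was used in Theorem~\ref{multiround}. By Theorem~\ref{optnewlp} we first compute an optimal $(x^*,p^*)$ to \newlp; since integer vectors of $\Om_\ex$ are exactly the characteristic vectors of allocations in $\Om$, \newlp is a relaxation of \eqref{primal} and its value is at most $\OPT$. For each $c\in\bD$, form the expected allocation $\bx_c:=\sum_{\w\in\Om_\ex}x^*_{c,\w}\w\in\Om_\lp$, and let $\bx_{c,i}$ denote its single player-$i$ coordinate in the single-dimensional setting. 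The standard Myerson argument applied to \eqref{e3}--\eqref{e4} shows that $c_i\mapsto\bx_{(c_i,c_{-i}),i}$ is non-increasing on $\bD_i$ for every fixed $c_{-i}\in\D_{-i}$; enumerating $\bD_i=\{c^{(1)}_i<\cdots<c^{(K)}_i<c^{(K+1)}_i=m_i\}$ and telescoping \eqref{e3} together with IR at $c^{(K+1)}_i=m_i$ yields the Myerson lower bound $p^*_{i,(c^{(k)}_i,c_{-i})}\ge\phi^{(k)}_i:=c^{(k)}_i\bx^{(k)}_i+\sum_{j>k}(c^{(j)}_i-c^{(j-1)}_i)\bx^{(j)}_i$, where $\bx^{(j)}_i:=\bx_{(c^{(j)}_i,c_{-i}),i}$.

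Next, apply Lemma~\ref{cvxdec1} to $\bx_c$ for each $c\in\bD$ to obtain a convex decomposition $\{(\ld^{(\ell)}_c,y^{(\ell)}_c)\}_\ell$ into integer allocations $y^{(\ell)}_c\in\Om$ with $\sum_\ell\ld^{(\ell)}_cy^{(\ell)}_{c,i}=X_i(c):=\min(\rho\bx_{c,i},1)$ and $\sum_\ell\ld^{(\ell)}_c\pub(y^{(\ell)}_c)\le\rho\pub(\bx_c)\le\rho\sum_\w x^*_{c,\w}\pub(\w)$ (the last inequality by convexity of the feasible region for $z(\cdot)$). Define the randomized mechanism $M$: on input $c\in\bD$, sample $\ell$ with probability $\ld^{(\ell)}_c$ and output $y^{(\ell)}_c$; pay player $i$ the discrete Myerson price $q_{i,(c^{(k)}_i,c_{-i})}:=c^{(k)}_iX^{(k)}_i+\sum_{j>k}(c^{(j)}_i-c^{(j-1)}_i)X^{(j)}_i$ with $X^{(j)}_i:=X_i(c^{(j)}_i,c_{-i})$ and $X^{(K+1)}_i:=0$. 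Clipping preserves monotonicity, so by Myerson's single-dimensional characterization $M$ is DSIC-in-expectation for reports in $\bD_i$ and IR-in-expectation for every $c_{-i}\in\D_{-i}$, which is exactly robust (BIC, IR)-in-expectation. Reports $c_i\notin\bD_i$ and profiles outside $\bD$ are handled just as in the proof of Theorem~\ref{multiround}, by giving player $i$ the utility-maximizing choice over $\range(i,c_{-i})$.

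Finally, using $X^{(j)}_i\le\rho\bx^{(j)}_i$ termwise,
\[
q_{i,(c^{(k)}_i,c_{-i})}\ \le\ \rho\Bigl(c^{(k)}_i\bx^{(k)}_i+\sum_{j>k}(c^{(j)}_i-c^{(j-1)}_i)\bx^{(j)}_i\Bigr)=\rho\phi^{(k)}_i\le\rho p^*_{i,(c^{(k)}_i,c_{-i})},
\]
and combined with the $\pub$-bound this gives expected disutility at most $\rho$ times the optimal value of \newlp, hence at most $\rho\OPT$. The main obstacle is exactly this reconciliation: the Lavi--Swamy cap at~$1$ forces us to abandon $p^*$ and reconstruct payments via Myerson, and the clean termwise bound $X^{(j)}_i\le\rho\bx^{(j)}_i$ is precisely what keeps the new Myerson payments within a $\rho$-factor of $p^*$.
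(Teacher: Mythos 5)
Your proof is correct and follows essentially the same route as the paper's: solve \newlp via Theorem~\ref{optnewlp}, apply the Lavi--Swamy-style decomposition of Lemma~\ref{cvxdec1} to the expected allocation $y_c$ (your $\bx_c$), redefine payments via the discrete Myerson formula applied to the rounded allocation rule, argue feasibility and the $\rho$-factor bound, and finish with Theorem~\ref{multiround}. The one small presentational difference is in how you establish $q\le\rho p^*$: the paper proves the telescoping inequality $q_{i,(c^{(k)},c_{-i})}-q_{i,(c^{(k+1)},c_{-i})}=c^{(k)}(\ty^{(k)}_i-\ty^{(k+1)}_i)\le\rho\,c^{(k)}(y^{(k)}_i-y^{(k+1)}_i)\le\rho(p^{(k)}-p^{(k+1)})$ (the first inequality needing a short case analysis on which of the two quantities is capped at~$1$), and then sums using $q_{i,(m_i,c_{-i})}=0$; you instead first extract the Myerson lower bound $\phi_i^{(k)}\le p^*_{i,(c^{(k)},c_{-i})}$ from \eqref{e3}--\eqref{e4}, and then use the immediate termwise bound $X^{(j)}_i=\min(\rho\bx^{(j)}_i,1)\le\rho\bx^{(j)}_i$ together with nonnegativity of the Myerson coefficients. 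By Abel summation these two calculations are equivalent, but your version avoids the case analysis and makes it slightly more transparent why the cap at~$1$ is harmless.
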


\begin{proof}
We solve \newlp to obtain an optimal solution $(x,p)$. 
Since $|\T_i|=1$ for all $i$, it will be convenient to view $\w\in\Om_\lp$ as a vector
$\{\w_i\}_{i\in[n]}\in[0,1]^n$, where $\w_i\equiv\w_{i,v}$ for the single covering object  
$v\in\T_i$. 
Fix $c\in\bD$. Define $y_c=\sum_{\w\in\Om_\ext}x_{c,\w}\w$ (which can be efficiently
computed since $x$ has polynomial support). 
Then, $\sum_{\w\in\Om_\ext} c_i(\w)x_{c,\w}=c_iy_{c,i}$ and
$d^Tz(y)\leq\sum_{\w\in\Om_\ext}\pub(\w)x_{c,\w}$.  
By Lemma~\ref{cvxdec1}, 
we can efficiently find a point $\ty_c=\sum_{\w\in\Om}\tx_{c,\w}\w$, where 
$\tx_c\geq 0, \sum_{w\in\Om} \tx_{c,\w}=1$, in the convex hull 
of the $\{0,1\}$-vectors in $\Om_{\lp}$ 
such that $\ty_{c,i}=\min(\rho y_{c,i},1)$ 
for all $i$, and $\sum_{w\in\Om}\tx_{c,\w}\pub(\w)\leq\rho d^Tz(y)$.

We now argue that one can obtain payments $\{q_{i,c}\}$ such that $(\tx,q)$ is feasible to
\eqref{primal} and $q_{i,c}\leq\rho p_{i,c}$ for all $i, c\in\bD$. 
Thus, the value of $(\tx,q)$ is at most $\rho$ times the value of $(x,p)$.
Applying Theorem~\ref{multiround} to $(\tx,q)$ yields the desired result.

Fix $i$ and $c_{-i}\in\D_{-i}$. 
Constraints \eqref{e45} and \eqref{e3} ensure that $y_{(m_i,c_{-i}),i}=0$, and
$y_{(c_i,c_{-i}),i}\geq y_{(c'_i,c_{-i}),i}$ for all $c_i,c'_i\in\bD_i$ s.t. $c_i<c'_i$.
Hence, $\ty_{(m_i,c_{-i}),i}=0,\ \ty_{(c_i,c_{-i}),i}\geq\ty_{(c'_i,c_{-i}),i}$ for
$c_i,c'_i\in\bD_i,\ c_i>c'_i$.
Define $q_{i,(m_i,c_{-i})}=0$. Let $0\leq c_i^1<c_i^2<\ldots<c_i^{k_i}$ be the values in $\D_i$.
For $c_i=c_i^\ell$, 
define 
$$
q_{i,(c_i,c_{-i})}=
c_{i}\ty_{(c_i,c_{-i}),i}+\sum_{t=\ell+1}^{k_i}(c_i^t-c_i^{t-1})\ty_{(c_i^t,c_{-i}),i}.
$$
Since $\sum_{\w\in\Om}c_i(\w)\tx_{(c_i,c_{-i}),\w}=c_i\ty_{(c_i,c_{-i}),i}$, 
\eqref{e4} holds. By construction, for consecutive values
$c_i=c_i^\ell,\ c'_i=c_i^{\ell+1}$, we have 
$q_{i,(c_i,c_{-i})}-q_{i,(c'_i,c_{-i})}
=c_i\bigl(\ty_{(c_i,c_{-i}),i}-\ty_{(c'_i,c_{-i}),i}\bigr)$, which is at most
$$ 
\rho\cdot c_i\bigl(y_{(c_i,c_{-i}),i}-y_{(c'_i,c_{-i}),i}\bigr)
\leq\rho\bigl(p_{i,(c_i,c_{-i})}-p_{i,(c'_i,c_{-i})}\bigr).
$$
Since $q_{i,(m_i,c_{-i})}=0\leq\rho p_{i,(m_i,c_{-i})}$, this implies that
$q_{i,(c_i,c_{-i})}\leq\rho p_{i,(c_i,c_{-i})}$. 
Finally, it is easy to verify that for any $c_i,c'_i\in\D_i$, we have
$q_{i,(c_i,c_{-i})}-q_{i,(c'_i,c_{-i})}
\geq c_i\bigl(\ty_{(c_i,c_{-i}),i}-\ty_{(c'_i,c_{-i}),i}\bigr)$, so $(\tx,q)$
satisfies \eqref{e3}.
\end{proof}

Corollary~\ref{bicredn1} and Theorem~\ref{bicredn2} yield polytime near-optimal
mechanisms for a host of single-dimensional \paym problems.
Table~\ref{onedapps} summarizes a few applications. 
Even for single-item procurement auctions, these are the {\em first} results for   
\paym problems with correlated players satisfying a notion stronger than (BIC, interim
IR). 

\begin{table}[ht!]
\begin{center}
\hspace*{-0.25in}
\begin{tabular}{|p{3.7in}|p{1.2in}|p{1in}|} \hline
{\centerline{\bf Problem}} & 
{\centerline{\bf Approximation}} & 
{\centerline{\bf Due to}} \\[-2ex] \hline
{{\em Single-item procurement auction}: buy one item provided by $n$ players} & \ 1 &
{\ Corollary~\ref{bicredn1}} \\ \hline
{{\em Metric \ufl}: players are facilities, output should be a \ufl solution} & \ 
{1.488 using~\cite{Li13}} &
{\ Theorem~\ref{bicredn2}} \\ \hline
{{\em Vertex cover}: players are nodes, output should be a vertex cover} & \ 2 &
{\ Theorem~\ref{bicredn2}} \\ \hline
{{\em Set cover}: players are sets, output should be a set cover} & \ $O(\log n)$ &
{\ Theorem~\ref{bicredn2}} \\ \hline
{{\em Steiner forest}: players are edges, output should be a Steiner forest} & \ 2 &
{\ Theorem~\ref{bicredn2}} \\ \hline 
{{\em Multiway cut} (a), {\em  Multicut} (b): players are edges, output should be a
multiway cut in (a), or a multicut in (b)} & 
{\ 2 for (a) \newline \hspace*{0.4ex} $O(\log n)$ for (b)} & 
{\ Theorem~\ref{bicredn2}} \\ \hline
\end{tabular}
\end{center}
\vspace{-1ex} 
\caption{\label{onedapps} Results for some representative single-dimensional
  \paym problems.} 
\end{table}

\begin{proofof}{Lemma~\ref{cvxdec1}}
It suffices to show that the LP \eqref{cvxprim} can be solved in polytime and its optimal
value is $1$. Throughout, we use $\ell$ to index $\{0,1\}$ vectors in $\Om_\lp$. (Recall
that these correspond to feasible allocations.)

\vspace{-3ex}

{\centering\small
\noindent \hspace*{-5ex}
\begin{minipage}[t]{.52\textwidth}
\begin{alignat}{2}
\max & \quad & \sum_\ell\lambda^{(\ell)} & \tag{Q} \label{cvxprim} \\
\text{s.t.} && \sum_\ell\lambda^{(\ell)} x^{(\ell)}_{i,v} = 
\min(&\rho x_{i,v},1) \quad \forall i,v\in\T_i \label{1} \\
&& \sum_{\ell}\lambda^{(\ell)}d^T\bigl(z(x^{(\ell)})\bigr) & 
\leq \rho d^Tz(x) \label{2} \\
&& \sum_\ell\ld^{(\ell)} & \leq 1 \label{3} \\ 
&& \lambda & \geq 0. \notag
\end{alignat}
\end{minipage}
\ \rule[-31ex]{1pt}{28ex}\ 
\begin{minipage}[t]{.51\textwidth}
\begin{alignat}{1}
\min & \quad \sum_{i,v\in\T_i}\min(\rho x_{i,v},1)\alpha_{i,v}
+\rho d^Tz(x)\cdot\beta + \tht \tag{R} \label{cvxdual} \\ 
\text{s.t.} & \quad \sum_{i,v\in\T_i}x^{(\ell)}_{i,v}\alpha_{i,v}
+d^T\bigl(z(x^{(\ell)})\bigr)\beta + \tht \geq 1 \quad \forall \ell \label{4} \\ 
& \qquad \beta,\tht \geq 0. \notag
\end{alignat}
\end{minipage}
}

\smallskip
Here the $\alpha_\ell$s, $\beta$ and $\tht$ are the dual variables corresponding to  
constraints \eqref{1}, \eqref{2}, and \eqref{3} respectively.
Clearly, $\OPT_{\eqref{cvxdual}}\leq 1$ since $\tht=1$, $\al_{i,v}=0=\beta$ for all
$i,v$ is a feasible dual solution. 

Suppose $(\hal,\hbeta,\htht)$ is a feasible dual solution of value less than 1.
Set $\tal_{i,v}=\hal_{i,v}$ if $\hal_{i,v}\geq 0$ and $\rho x_{i,v}\leq 1$, and
$\tal_{i,v}=0$ otherwise. 
Let $\Gm=\frac{1}{\hbeta}$ if $\hbeta>0$ and equal to $2N d^Tz$ otherwise, where $N$ is is
such that for all $\{0,1\}$-vectors $x^{(\ell)}\in\Om_\lp$, we have that $c^Tx^{(\ell)}>c^Tx$
implies $c^Tx^{(\ell)}\geq c^Tx+\frac{1}{N}$. Note that we can choose $N$ so that its size
is $\poly(\I,\text{size of $x$})$.
Consider the \cm problem defined by the input $\Gm\tal$. Running $\A$ on this input, we
obtain a $\{0,1\}$-vector $x^{(\ell)}\in\Om_\lp$ whose total cost is at most $\rho$ times
the cost of the fractional solution $\bigl(x,z(x)\bigr)$. This translates to
\begin{equation}
\sum_{i,v}x^{(\ell)}_{i,v}\tal_{i,v}+d^T\bigl(z(x^{(\ell)})\bigr)\hbeta
\leq \rho\Bigl(\sum_{i,v}x_{i,v}\tal_{i,v}+d^Tz(x)\cdot\hbeta\Bigr).
\label{5}
\end{equation}
Now augment $x^{(\ell)}$ to the following $\{0,1\}$-vector $\hx$: set $\hx_{i,v}=1$ if
$\rho x_{i,v}>1$ or $\hal_{i,v}<0$, and $x^{(\ell)}_{i,v}$ otherwise. Then $\hx$ is the
characteristic vector of a feasible allocation, since we have only added covering objects
to the allocation corresponding to $x^{(\ell)}$; hence $\hx\in\Om_\lp$. We have
$d^Tz(\hx)=\pub(\hx)\leq\pub(x^{(\ell)})=d^T\bigl(z(x^{\ell})\bigr)$ and
$$
\sum_{i,v}\hx_{i,v}\hal_{i,v}
=\sum_{\substack{i,v:\rho x_{i,v}>1 \\ \text{or }\hal_{i,v}<0}}\hal_{i,v}+\sum_{i,v}x^{(\ell)}_{i,v}\tal_{i,v}
\leq\sum_{\substack{i,v:\rho x_{i,v}>1 \\ \text{or }\hal_{i,v}<0}}\min(\rho x_{i,v},1)\hal_{i,v}
+\sum_{i,v}x^{(\ell)}_{i,v}\tal_{i,v}.
$$
Combined with \eqref{5}, this shows that 
\begin{equation*}
\begin{split}
\sum_{i,v}\hx_{i,v}\hal_{i,v}+d^Tz(\hx)\hbeta
& \leq\sum_{\substack{i,v:\rho x_{i,v}>1 \\ \text{or }\hal_{i,v}<0}}\min(\rho x_{i,v},1)\hal_{i,v}
+\sum_{i,v:\tal_{i,v}>0}\rho x_{i,v}\tal_{i,v}+\rho d^Tz(x)\cdot\hbeta \\
& =\sum_{i,v}\min(\rho x_{i,v},1)\hal_{i,v}+\rho d^Tz(x)\cdot\hbeta < 1-\htht
\end{split}
\end{equation*}
which contradicts that $(\hal,\hbeta,\htht)$ is feasible to \eqref{cvxdual}. Hence,
$\OPT_{\eqref{cvxprim}}=\OPT_{\eqref{cvxdual}}=1$. 

Thus, we can add the constraint 
$\sum_{i,v\in\T_i}\min(\rho x_{i,v},1)\alpha_{i,v}+\rho d^Tz(x)\cdot\beta+\tht\leq 1$ to
\eqref{cvxdual} without altering anything. If we solve the resulting LP using the ellipsoid
method, and take the inequalities corresponding to the violated inequalities \eqref{4}
found by $\A$ during the ellipsoid method, then we obtain a compact LP with only a
polynomial number of constraints that is equivalent to \eqref{cvxdual}. The dual of this
compact LP yields an LP equivalent to \eqref{cvxprim} with a polynomial number of
$\ld^{(\ell)}$ variables which we can solve to obtain the desired convex decomposition. 
\end{proofof}

\section{Multidimensional problems} \label{multi}

We obtain results for multidimensional \paym problems via two distinct approaches. One is
by directly applying Corollary~\ref{bicredn1} (e.g., Theorem~\ref{multiproc}). The other 
approach is based on again moving to an LP-relaxation of the \cm problem and utilizing
Theorem~\ref{optnewlp} in conjunction with a {\em stronger} LP-rounding approach. This
yields results for multidimensional (metric) \ufl and its variants (Theorem~\ref{uflres}).

\vspace{-1ex}
\paragraph{Multi-item procurement auctions.} 
Here, we have $n$ sellers and $k$ (heterogeneous) items. 
Each seller $i$ has a {\em supply vector} $s_i\in\Z_+^k$ denoting his supply for the
various items, and the buyer has a {\em demand vector} $d\in\Z_+^k$ specifying his
demand for the various items. This is public knowledge.
Each seller $i$ has a {\em private} cost-vector $c_i\in\R_+^k$, where $c_{i,\ell}$ is the
cost he incurs for supplying {\em one unit} of item $\ell$. 
A feasible solution is an allocation specifying how many units of each item each seller
supplies to the buyer such that 
for each item $\ell$, each seller $i$ provides at most $s_{i,\ell}$ units of $\ell$ and
the buyer obtains $d_\ell$ total units of $\ell$.
The corresponding \cm problem is a min-cost flow problem (in a bipartite graph), which
can be efficiently solved optimally, thus we obtain a polytime optimal mechanism.

\begin{theorem} \label{multiproc}
There is a polytime optimal support-based-(IC-in-expectation, IR) mechanism for
multi-unit procurement auctions.
\end{theorem}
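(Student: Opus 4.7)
The plan is to invoke Corollary~\ref{bicredn1} after casting multi-item procurement auctions into the multidimensional additive-types covering framework and exhibiting a polynomial-time exact algorithm for the associated \cm problem. First, I would formalize the reduction: seller $i$'s type is his per-unit cost vector $c_i\in\R_+^k$, and a feasible allocation is an integer matrix $\w=(x_{i,\ell})$ with $0\le x_{i,\ell}\le s_{i,\ell}$ for all $i,\ell$ and $\sum_i x_{i,\ell}\ge d_\ell$ for every item $\ell$. Seller $i$'s cost under $\w$ is $c_i(\w)=\sum_\ell c_{i,\ell}x_{i,\ell}$, which is linear in $c_i$; hence the set of admissible per-unit cost vectors is closed under coordinate-wise addition, giving precisely the additivity property of Section~\ref{prelim}. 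Since $\pub(\w)=0$ for every feasible allocation, the disutility collapses to the expected total payment.

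Next, the \cm problem on a reported profile $c$ is to minimize $\sum_i c_i(\w)=\sum_{i,\ell} c_{i,\ell}x_{i,\ell}$ subject to the capacity and demand constraints above. This is the classical transportation problem on the bipartite graph with sellers on one side and items on the other; it is equivalent to a min-cost bipartite flow problem and can be solved exactly in polynomial time by standard min-cost flow algorithms. Total unimodularity of the constraint matrix furthermore guarantees integer optima automatically, so no further rounding is needed.

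Finally, plugging this \cm algorithm into Corollary~\ref{bicredn1} yields the desired polytime optimal robust-(BIC-in-expectation, IR) mechanism. Under the hood, Theorem~\ref{optcm} solves \eqref{primal} by running the ellipsoid method on \eqref{dual} with min-cost flow as the separation oracle---its proof already reduces the potentially negative-cost instances generated by the separator back to the nonnegative-cost regime that min-cost flow handles directly---and Theorem~\ref{multiround} then lifts the optimal LP solution to a mechanism of matching expected disutility. Consequently there is no genuine obstacle once the setup is in place: the substantive content of the theorem is the identification of \cm with transportation, and the only bookkeeping point worth flagging is that the overall running time is polynomial in the bit-size (not the magnitude) of the supplies and demands, which is inherited from standard polynomial-time min-cost flow algorithms.
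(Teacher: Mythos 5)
Your proposal is correct and follows the same route the paper takes: cast the allocation problem as a multidimensional covering problem with additive types, observe that the \cm problem is a min-cost bipartite flow (transportation) problem solvable exactly in polynomial time, and invoke Corollary~\ref{bicredn1}. The paper's own proof is just the sentence at the end of the ``Multi-item procurement auctions'' paragraph; you usefully spell out the points it leaves implicit (linearity of $c_i(\w)$ in $c_i$ giving additivity, $\pub\equiv 0$, total unimodularity, and running time polynomial in bit-size rather than magnitude of the supply/demand vectors).
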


\vspace{-1ex}
\paragraph{Multidimensional budgeted (metric) uncapacitated facility location (\ufl).} 
Here, we have a set $\mE$ of clients that need to be serviced by facilities, and a set
$\F$ of locations where facilities may be opened. Each player $i$ may provide facilities
at the locations in $\T_i\sse\F$. We may assume that the $\T_i$s are disjoint. 
For each facility $\ell\in\T_i$ that is opened, $i$ incurs a private opening cost 
$f_{\ell}\equiv f_{i,\ell}$, and assigning client $j$ to an open facility $\ell$ incurs a  
publicly-known assignment cost $d_{\ell j}$, where the $d_{\ell j}$s form a
metric. We are also given a public assignment-cost budget $B$. 
The goal in \bufl is to open a subset $F\sse\F$ of facilities and assign each client $j$ 
to an open facility $\sg(j)\in F$ so as to minimize 
$\sum_{\ell\in F}f_{\ell}+\sum_{j\in \mE}d_{\sg(j) j}$ subject to 
$\sum_{j\in \mE}d_{\sg(j) j}\leq B$;
\ufl is the special case where $B=\infty$.
We can define $\pub(T_1,\ldots,T_n)$ to be the total assignment cost if this is at most
$B$, and $\infty$ otherwise. 

Let $\iopt$ denote the expected disutility of an optimal mechanism for \bufl.
We obtain a mechanism with expected disutility at most $2\iopt$ that 
always returns a solution with expected assignment cost at most $2B$.
Consider the following LP-relaxation for \bufl. 
\begin{gather*}
\min \ \sum_{\ell\in\F} f_\ell x_\ell+\sum_{j\in\mE, \ell\in\F}d_{\ell j}z_{\ell j}\tag{BFL-P} 
\label{bfl-p} 
\quad \text{s.t.} \\
\sum_{j\in\mE, \ell\in\F}d_{\ell j}z_{\ell j}\leq B, \quad \
\sum_{\ell\in\F} z_{\ell j} \geq 1 \ \ \forall j\in\mE, \quad \
0\leq z_{\ell j} \leq x_\ell \ \ \forall \ell\in\F, j\in\mE. 
\end{gather*}

Let $\flp$ denote \eqref{bfl-p} with $B=\infty$, and $\flopt$ denote its optimal value. 
We say that an algorithm $\A$ is a {\em Lagrangian multiplier preserving} (LMP)
$\rho$-approximation algorithm for \ufl if for every instance, it returns a solution 
$(F,\sg)$ such that 
$\rho\sum_{\ell\in F}f_\ell+\sum_{j\in\mE}d_{\sg(j)j}\leq\rho\cdot\flopt$.
In~\cite{MinooeiS12}, it is shown that given such an algorithm $\A$, one can take any 
solution $(x,z)$ to \flp and obtain a convex combination of {\em \ufl solutions} 
$(\ld^{(1)}; F^{(1)},\sg^{(1)}),\ldots,(\ld^{(k)}; F^{(k)},\sg^{(k)})$---so 
$\ld\geq 0,\ \sum_r\ld^{(r)}=1$---such that $\sum_{r:\ell\in F^{(r)}}\ld^{(r)}=x_\ell$ for
all $\ell$ and $\sum_r\ld^{(r)}\bigl(\sum_j d_{\sg^{(r)}(j)j}\bigr)\leq\rho\sum_{j,\ell} d_{\ell j}z_{\ell j}$. 
An LMP 2-approximation algorithm for \ufl is known~\cite{Jain}.

\begin{lemma} \label{uflround}
Given an LMP $\rho$-approximation algorithm for \ufl, one can design a polytime
support-based-(IC-in-expectation, IR) mechanism for \bufl whose expected disutility is at 
most $\rho\iopt$ while violating the budget by at most a $\rho$-factor.
\end{lemma}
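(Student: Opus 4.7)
The plan is to solve an LP-relaxation of the payment-minimization LP \eqref{primal} and then round using the LMP-based convex-decomposition procedure of~\cite{MinooeiS12}. First, I would encode the \cm problem for \bufl via \eqref{bfl-p}; this fits the template \eqref{cmlp} (with $d^T z$ playing the role of $\pub(\cdot)$) and satisfies conditions (a)--(d), so by Theorem~\ref{optnewlp} I can efficiently obtain an optimal solution $(x,p)$ to the relaxation \newlp whose allocation set is the finite set $\Om_\ext$ of extreme points of the feasible region of \eqref{bfl-p}. Since every integer feasible allocation of \bufl corresponds to a point in $\Om_\ext$, \newlp relaxes \eqref{primal}, and so its optimum $\OPT_{\newlp}$ satisfies $\OPT_{\newlp}\leq\iopt$.

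Next, for each $c\in\bD$, form the averaged vectors $\bar x_c:=\sum_{\w\in\Om_\ext}x_{c,\w}\w$ and $\bar z_c:=\sum_{\w}x_{c,\w} z(\w)$. Each $\w\in\Om_\ext$ is feasible to \eqref{bfl-p}, so $(\bar x_c,\bar z_c)$ is feasible to \flp with $d^T\bar z_c\leq\sum_\w x_{c,\w}\pub(\w)\leq B$. Apply the LMP $\rho$-approximation algorithm for \ufl (e.g.\ Jain's $\rho=2$ algorithm) together with the decomposition of~\cite{MinooeiS12} to $(\bar x_c,\bar z_c)$ to obtain a convex combination $\bigl(\ld^{(r)}_c;F^{(r)}_c,\sg^{(r)}_c\bigr)_r$ of integer \ufl solutions satisfying $\sum_{r:\ell\in F^{(r)}_c}\ld^{(r)}_c=\bar x_{c,\ell}$ for every $\ell$ and $\sum_r\ld^{(r)}_c\sum_j d_{\sg^{(r)}_c(j)j}\leq\rho\,d^T\bar z_c\leq\rho B$.

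Now define $\tx_{c,\w}$ to be the total $\ld^{(r)}_c$-weight placed on $\w$, and set $q_{i,c}:=p_{i,c}$. The crucial observation is that for every player $i$,
\[
\sum_\w c_i(\w)\,\tx_{c,\w} \;=\; \sum_{\ell\in\T_i} f_\ell\,\bar x_{c,\ell} \;=\; \sum_{\w'\in\Om_\ext} c_i(\w')\,x_{c,\w'},
\]
since the LMP decomposition preserves the $x$-marginal $\bar x_c$ exactly. Consequently the robust-BIC and robust-IR constraints \eqref{e3}, \eqref{e4} for $(\tx,q)$ reduce to the corresponding constraints for $(x,p)$ in \newlp, and for a type $c$ with $c_i=m_i\bon_{\T_i}$ the fact that $\bar x_{c,\ell}=0$ for all $\ell\in\T_i$ forces every realized \ufl solution in the support of $\tx_c$ to have $\w_i=\es$, ensuring $\w\in\Om(c)$ for \eqref{primal}. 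Thus $(\tx,q)$ is feasible to \eqref{primal} with value at most $\sum_c\Pr_\D(c)\bigl(\sum_i p_{i,c}+\rho\sum_{\w'\in\Om_\ext}x_{c,\w'}\pub(\w')\bigr)\leq\rho\cdot\OPT_{\newlp}\leq\rho\iopt$. Feeding $(\tx,q)$ into Theorem~\ref{multiround} then yields a polytime robust-(BIC-in-expectation, IR) mechanism with expected disutility $\leq\rho\iopt$, and on each input $c$ its expected assignment cost is $\sum_\w\tx_{c,\w}\pub(\w)\leq\rho B$.

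The main obstacle is the interplay between the budget and the LMP decomposition: the decomposition is stated for the \emph{unbudgeted} \flp, and its assignment-cost guarantee holds only in expectation over the convex combination rather than per realization---this is precisely what forces the lemma to guarantee only a $\rho$-factor budget violation in expectation. What makes everything click on the incentive side is that the LMP procedure preserves the opening-variable marginals exactly, so each player's expected opening cost under $\tx$ equals that under $x$, and the original payments $p$ can be reused verbatim without any blowup on the payment component of the disutility; the $\rho$-factor loss is confined to the $\pub(\cdot)$ component.
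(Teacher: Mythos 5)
Your proposal is correct and follows essentially the same approach as the paper: solve \newlp (with allocation set $\Om_\ext$ of extreme points of \eqref{bfl-p}) via Theorem~\ref{optnewlp}, form the averaged vectors $y_c$ and $z_c$ (your $\bar x_c,\bar z_c$) for each $c\in\bD$, apply the LMP-based convex decomposition of~\cite{MinooeiS12} to re-express each $y_c$ as a convex combination of integral \ufl solutions with assignment cost blown up by at most $\rho$, and feed the resulting $(\tx,p)$ into Theorem~\ref{multiround}. You are also right, and slightly more explicit than the paper, that because the LMP decomposition preserves the opening-variable marginals exactly (rather than scaling them as in Theorem~\ref{bicredn2}), the original payments carry over unchanged and the $\rho$-factor loss is confined to the $\pub(\cdot)$ term and the budget.
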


\begin{proof} 
The LP-relaxation \eqref{bfl-p} for the \cm problem is of the form \eqref{cmlp} and
satisfies the required properties.  
Recall that for $x\geq 0$, $z(x)$ denotes the min-cost completion of $x$ to a feasible
solution to \eqref{bfl-p} if one exists, and is $\bot$ if there is no such completion of
$x$. Let $\Om_\lp:=\{x: z(x)\neq\bot,\ \ 0\leq x_\ell\leq 1\ \forall\ell\}$. 
For integral $\w\in\Om_\lp$, $z(\w)$ specifies the assignment where each client $j$ is
assigned to the nearest open facility. 
By Theorem~\ref{optnewlp}, one can efficiently compute an optimal solution $(X,p)$ to the
relaxation of \eqref{primal} where the set of feasible allocations is the set $\Om_\ex$ of
extreme points of $\Om_\lp$.

We round $(X,p)$ to a feasible solution to \eqref{primal} by proceeding as in the
proof of Theorem~\ref{bicredn2}. Let $\Om_\ufl$ be the set of characteristic vectors of
open facilities of all integral \ufl solutions. 
We use $\ell$ to index facilities in $\F$ and $j$ to index clients in $\mE$.  
Fix $c\in\bD$. Define $y_c=\sum_{\w\in\Om_\ex}X_{c,\w}\w$, so
$\sum_{w\in\Om_\ext}c_i(\w)X_{c,\w}=\sum_{\ell\in\T_i}f_\ell y_{c,\ell}$.
Let $z_c=\sum_{\w\in\Om_\ex}X_{c,\w}z(\w)$, so $\sum_{j,\ell}z_{c,\ell j}d_{\ell j}\leq B$.
We use the LMP $\rho$-approximation algorithm to express
$y_c$ as a convex combination $\sum_{\w\in\Om_\ufl}\tx_{c,\w}\w$ of (integral)
\ufl-solutions such that the expected assignment cost
$\sum_{\w\in\Om_\ufl}\tx_{c,\w}\sum_{j,\ell}z(\w)_{\ell j}d_{\ell j}$ is at most
$\rho\sum_{j,\ell}d_{\ell j}z_{c,\ell j}\leq\rho B$.
Hence, $(\tx,p)$ is a feasible solution to \eqref{primal}. Theorem~\ref{multiround} now 
yields the desired mechanism. 
\end{proof}

\begin{theorem} \label{uflres}
There is a polytime support-based-(IC-in-expectation, IR) mechanism for \bufl
with expected disutility at most $2\iopt$, which always returns a solution with
expected assignment cost at most $2B$. 
\end{theorem}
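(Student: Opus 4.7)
The plan is essentially a one-line argument: instantiate Lemma~\ref{uflround} with $\rho=2$, using the well-known LMP $2$-approximation algorithm for metric \ufl due to Jain et al.~\cite{Jain}. Their primal-dual algorithm returns, for any \ufl instance, a solution $(F,\sg)$ satisfying $2\sum_{\ell\in F}f_\ell+\sum_{j\in\mE}d_{\sg(j)j}\leq 2\cdot\flopt$, which is exactly the LMP hypothesis of Lemma~\ref{uflround} with $\rho=2$. The conclusions of Lemma~\ref{uflround} for this $\rho$ are precisely the conclusions of Theorem~\ref{uflres}: expected disutility at most $2\iopt$ and expected assignment cost at most $2B$.

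For concreteness, the chain of ingredients I would rely on inside Lemma~\ref{uflround} is as follows. First, Theorem~\ref{optnewlp} produces an optimal solution $(X,p)$ to the relaxation of \eqref{primal} whose allocation set is the set $\Om_\ex$ of extreme points of \eqref{bfl-p}. For each type-profile $c\in\bD$, the fractional opening vector $y_c=\sum_{\w\in\Om_\ex}X_{c,\w}\w$ and the accompanying fractional routing $z_c$ satisfy $\sum_{j,\ell}d_{\ell j}z_{c,\ell j}\leq B$. Feeding $(y_c,z_c)$ into Jain's algorithm via the LMP-preserving convex decomposition of~\cite{MinooeiS12} yields a distribution over integral \ufl-solutions that opens each facility $\ell$ with probability exactly $y_{c,\ell}$ while inflating the expected routing cost by a factor of at most $\rho=2$, hence keeping it at most $2B$. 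Since the per-player expected facility cost $\sum_{\ell\in\T_i}f_\ell y_{c,\ell}$ is preserved by the decomposition, the payments $p$ continue to make the rounded solution feasible to \eqref{primal}, and Theorem~\ref{multiround} converts this feasible solution into the claimed mechanism.

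There is no real obstacle here beyond verifying that Jain's algorithm qualifies as an LMP $2$-approximation in the sense required by Lemma~\ref{uflround}; all the delicate reasoning about payments, ex-post robustness, and how LP-rounding interacts with the IC constraints has been absorbed into Lemma~\ref{uflround}, Theorem~\ref{multiround}, and the convex-decomposition result of~\cite{MinooeiS12}. The theorem is therefore obtained by stitching these black boxes together.
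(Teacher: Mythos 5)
Your proposal is correct and follows exactly the paper's route: the paper also obtains Theorem~\ref{uflres} as an immediate corollary of Lemma~\ref{uflround} by invoking the LMP $2$-approximation algorithm of~\cite{Jain} with $\rho=2$, and your elaboration of the internals of Lemma~\ref{uflround} matches the paper's proof of that lemma.
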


\section{Extensions: alternative solution concepts} \label{extn}
We now investigate the \paym problem under various
alternative solution concepts. In Section~\ref{weaker}, we consider solution concepts
weaker than support-based (IC, IR), but yet robust enough to ensure that truthful
participation is an ex-post no-regret choice for every player at every type profile in the 
support of the underlying distribution. We show that {\em all} our guarantees extend
readily to these solution concepts. (Note that a weaker solution concept does not
necessarily mean that the corresponding \paym mechanism-design problem is a simpler
problem; a weaker solution concept enlarges the space of allowed mechanisms, which could
make it more- or less- difficult to search for an optimal mechanism.) 
In Section~\ref{dsic}, we consider the stronger solution concept
of (DSIC (in-expectation, IR), and obtain results for single-dimensional settings but at
the expense of increasing the running time to exponential in the number of players.

\subsection{Solution concepts weaker than support-based (IC, IR)} \label{weaker}
Consider the following weakenings of support-based (IC, IR) (Definition~\ref{expostbic}). 

\medskip
\noindent For every player $i$, 

\vspace*{-4.5ex}
\begin{alignat}{3}
u_i(\bc_i,c_{-i};\bc_i) & \geq u_i(c_i,c_{-i};\bc_i), \quad 
u_i(\bc_i,c_{-i};\bc_i) && \geq 0, \qquad 
&& \text{for all $(\bc_i,c_{-i})\in\D$,\ \ $c_i: (c_i,c_{-i})\in\D$} 
\label{opta} \\
u_i(\bc_i,c_{-i};\bc_i) & \geq u_i(c_i,c_{-i};\bc_i), \quad 
u_i(\bc_i,c_{-i};\bc_i) && \geq 0, \qquad 
&& \text{for all $(\bc_i,c_{-i})\in\D$,\ \ $c_i\in C_i$} \label{optb} \\
u_i(\bc_i,c_{-i};\bc_i) & \geq u_i(c_i,c_{-i};\bc_i), \quad 
u_i(\bc_i,c_{-i};\bc_i) && \geq 0, \qquad 
&& \text{for all $\bc_i,c_i\in\D_i$,\ \ $c_{-i}\in\D_{-i}$} \label{optc}
\end{alignat}

All three solution concepts, \eqref{opta}--\eqref{optc}, ensure that truthful
participation is in the best interest of every player $i$ at every type-profile in $\D$
even at the ex-post stage when he knows the realized types of all players, but for
varying choices of lies: in \eqref{optb}, the lie could be anything, so a mechanism
satisfying \eqref{optb} is (BIC, interim IR) for every distribution whose support is a
subset of $\D$;
in \eqref{optc}, the ``best interest'' is among lies consistent with $i$'s
support; and in \eqref{opta}, the ``best interest'' is among lies consistent
with the support of the distribution.

We now argue that our results extend to these notions. For notions \eqref{opta} and
\eqref{optc}, one can simply incorporate all the IC and IR constraints in the LP. Note 
that there are $O(n|\D|^2)$ such constraints under \eqref{opta}, and $O(n|\D|^3)$
constraints under \eqref{optc}, so the size of the resulting LP is $\poly(n,|\D|)$. 
Theorem~\ref{optcm} continues to hold for the resulting LP, due to the same arguments.
For both notions, an LP solution immediately yields a randomized mechanism satisfying
that notion, except that utility is replaced by expected utility; for type profiles not
included in the LP, we may output any outcome $\w_0$ (and any prices).
The refinements for single-dimensional settings and multidimensional FL work in the same
fashion as before: the appropriate LP (e.g., \newlp) is modified to include the
appropriate set of IC and IR constraints and solved as before. The rounding of an LP
solution to obtain a suitable mechanism proceeds as before.
Consequently, all of our results extend to these two notions with minimal effort.

For notion \eqref{optb}, we incorporate constraints \eqref{optb} but restrict $c_i$ to lie  
in $\bD_i$. Recall that $\bD_i=\D_i\cup\{m_i\bon\T_i\}$. We also include constraints
\eqref{e45} as before. Again, the resulting LP can be solved given an optimal algorithm
for the \cm problem. The LP solution can be extended to a mechanism 
{\em exactly} as in Theorem~\ref{multiround}, and it is easy to see
that this extension satisfies \eqref{optb}. Therefore, all our results extend to this
notion as well.

\subsection{Dominant-strategy IC mechanisms} 
\label{dsic}
We can strengthen our results from Section~\ref{single} to obtain (near-) optimal 
{\em dominant-strategy incentive compatible} (DSIC) mechanisms for single-dimensional
problems in time exponential in $n$. 
Thus, we obtain polytime mechanisms for any constant number of players. 

The key change is in the LP \eqref{primal} (or \newlp), 
where we now enforce \eqref{e2}--\eqref{e45} for every player $i$ and every type profile 
in $\prod_i\bD_i$. 
(Note that, as before, we can only enforce IC and IR constraints for a finite set of
type profiles.) 
Theorem~\ref{optcm}, as also the rounding procedure and arguments in
Theorem~\ref{bicredn2} proceed essentially identically to yield a near-optimal solution to
this LP. 
We prove that in single-dimensional settings, enforcing the IC, IR constraints for the set
$\prod_i\bD_i$ of type profiles enables one to extend the LP solution to a
(DSIC-in-expectation, IR) mechanism without increasing the expected disutility. Thus, we
obtain the same guarantees as in Table~\ref{onedapps}, but under the stronger solution
concept of (DSIC-in-expectation, IR). 

We focus on single-dimensional settings here because at
various places, our arguments rely on the well-known equivalence between monotonic
allocation rules and DSIC-implementable allocation rules. 
We do not know if a similar result holds for multidimensional settings. This, and various
other unanswered questions emerge from our result; 
we mention a few of these below, before delving into our construction for
single-dimensional settings.
\begin{list}{(\alph{enumi})}{\usecounter{enumi} \topsep=0ex \itemsep=0ex
    \addtolength{\leftmargin}{-1ex}} 
\item In multidimensional settings, what finite subsets $C'\sse C$ of the type space have
the property that enforcing the IC and IR constraints for every player $i$ and every type
profile in $C'$ suffice to extend an LP solution to a (DSIC-in-expectation, IR mechanism)?
\item Does $C'=\prod_i\bD_i$ have this extension property (as is the case in
single-dimensional settings)? 
\item Is there some $C'$ of size $\poly(n,|D|)$ with this extension property?
\end{list}

\medskip
We now describe briefly the changes required to obtain (DSIC-in-expectation, IR)
mechanisms. 
Analogous to Lemma~\ref{estim}, we can obtain estimates $m_i$ such that there is an
optimal mechanism $M^*$ such that on any input $c\in\prod_i(\D_i\cup\{m_i\})$ where
$c_i<m_i$ for at least one $i$, $M^*$ only buys the item with non-zero probability
from a player $i$ with $c_i<m_i$ (the same proof approach works).
Let $\tD:=\prod_{i}\bD_i$ and $\tD_{-i}:=\prod_{j\neq i}\bD_j$; also, let
$\tD_i:=\bD_i:=\D_i\cup\{m_i\}$ for uniformity of notation.  
For $c\in\tD$, define $\Om(c)=\{\w\in\Om: \w_i=\es\text{ for all $i$ s.t. $c_i=m_i$}\}$, 
if there is some $i$ such that $c_i<m_i$, and $\Om$ otherwise.
In our LP \eqref{primal}, or its relaxation \newlp (where we move to an LP-relaxation of
the \cm problem and consider the allocation-set $\Om_{\ext}$), we now enforce
\eqref{e2}--\eqref{e45} for all $i$, all $c_i,c'_i\in\tD_i$ and all $c_{-i}\in\tD_{-i}$.  

Let \indprim, and \indnewlp (with allocation-set $\Om_\lp$) denote these new LPs.
When $n$ is a constant, both LPs have a polynomial number of constraints. So again by
considering the dual, we can efficiently compute: (i) an optimal solution to \indprim
given an optimal algorithm for the \cm problem; and (ii) an optimal solution to
\indnewlp. If the \cm problem can be encoded via \eqref{cmlp} and we have an LP-relative
approximation algorithm for the \cm problem, then one can use the rounding procedure
described in Theorem~\ref{bicredn2} to round the optimal solution to \indnewlp to a 
near-optimal solution to \indprim; the arguments are essentially identical.

So suppose that we have a near-optimal solution $(x,p)$ to \indprim.
We 
extend $(x,p)$ to a (DSIC-in-expectation, IR) mechanism $M=\bigl(\A,\{q_i\}\bigr)$ without
increasing the expected disutility. 
Here, $\A(c)$ and $q_i(c)$ denote as before the
allocation-distribution and expected payment to $i$, on input $c$. 

Define $y_c=\sum_\w x_{c,\w}\w$, where we treat $\w$ as a vector in $\{0,1\}^n$ with
$\w_i\equiv\w_{i,v}$ for the single covering object $v\in\T_i$.
Let $0\leq c_i^1<c_i^2<\ldots<c_i^{k_i}=c_i^{\max}$ be the values in $\D_i$, and set 
$c_i^{k_i+1}:=m_i$.  Define the mapping $H:C\to\tD$ as follows: set
$H(c):=\bigl(H_i(c_i)\bigr)_{i=1,\ldots,n}$, where $H_i(c_i)$ is $c_i^{r+1}$ if
$c_i\in(c_i^{r},c_i^{r+1}],\ r\leq k_i$, and $m_i$ if $c_i\geq m_i$. 
Define $H_{-i}(c_{-i}):=\bigl(H_j(c_j)\bigr)_{j\neq i}$. 

Consider $c\in C$. If $c_i\leq c_i^{\max}$ for at least one $i$,
we set $\A(c)=y_{H(c)}$. 
If $c_i>c_i^{\max}$ for all $i$, we set $\A(c)$ as in the VCG mechanism.
Since we are in the single-dimensional setting, if we show that for all 
$i$, $c_{-i}\in C_{-i}$, $\A(c)_i$ is non-increasing in $c_i$ and hits 0 at some point,
then we know that setting $q_i(c)=c_i\A(c)_i+\int_{c_i}^\infty\A(t,c_{-i})_i dt$ 
ensures such that $M=\bigl(\A,\{q_i\}\bigr)$ is (DSIC, IR)-in-expectation.

Consider some $i$, $c_{-i}\in C_{-i}$. 
If $c_j\leq c_j^{\max}$ for some $j\neq i$, then $\A(c)=y_{H(c)}$.
Since $H_i$ is non-decreasing in $c_i$ and $y_{c,i}$ is non-increasing in $c_i$ (which is 
easily verified), it follows that $\A(c)_i$ is non-increasing in $c_i$.
Also, if $c_{-i}\in\tD_{-i}$, then one can argue as in the proof of Theorem~\ref{bicredn2}
that $q_i(c)\leq p_{i,c}$. Hence, $M$ has expected total payment at most
$\sum_{c,i}\Pr_\D(c)p_{i,c}$.  
Suppose $c_j>c_j^{\max}$ for all $j\neq i$. Then, $H_j(c_j)=m_j$ for all $j\neq i$. So
$\A(c)=y_{H(c)}$ for $c_i\leq c_i^{\max}$, and is the VCG allocation for
$c_i>c_i^{\max}$. Therefore, $\A(c)_i=1$ for $c_i\leq c_i^{\max}$, and the VCG allocation
for $c_i>c_i^{\max}$, which is clearly non-increasing in $c_i$.

\begin{theorem} \label{dsicthm}
For single-dimensional problems with a constant number of players, we obtain the
same guarantees as in Table~\ref{onedapps}, but under the stronger solution concept of
DSIC-in-expectation and IR. 
\end{theorem}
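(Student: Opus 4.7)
The plan is to upgrade the solution concept from robust-BIC-in-expectation to DSIC-in-expectation by (a) tightening the LP to enforce its IC/IR constraints over \emph{all} off-player profiles in $\prod_{j\neq i}\bD_j$, not just those in $\D_{-i}$; (b) observing that when $n=O(1)$, the enlarged LP still has polynomial size so the algorithmic results of Theorems~\ref{optcm} and~\ref{bicredn2} go through verbatim; and (c) using the monotonicity structure special to single-dimensional settings to convert a feasible LP solution into a DSIC-in-expectation, IR mechanism via Myerson-style payments, without inflating the expected total payment.

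For step (a), I would first extend Lemma~\ref{estim} to produce estimates $m_i$ that are valid under the dominant-strategy requirement: since the lemma's argument only uses the existence of a cost-minimizing action for each player in isolation, essentially the same proof yields $m_i$ so that there is an optimal DSIC-in-expectation IR mechanism that, whenever some coordinate equals $m_i$ and some other coordinate is below $m_j$, buys only from players with $c_j<m_j$. This justifies the new definition $\Om(c):=\{\w\in\Om:\w_i=\es\text{ for all }i\text{ with }c_i=m_i\}$ used in \indprim and \indnewlp. The LPs themselves are now indexed by the product set $\tD=\prod_i\bD_i$ of size $\prod_i|\bD_i|$, which is polynomial for constant $n$; the dual has the same structure as \eqref{dual} and admits the same ellipsoid-based separation argument using an (LP-relative approximation) algorithm for the \cm problem. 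So an optimal (resp.\ $\rho$-approximate) solution $(x,p)$ to \indprim can be computed efficiently, and the rounding step of Theorem~\ref{bicredn2} applies unchanged to yield such a solution when only an LP-relative $\rho$-approximation for \cm is available.

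The main step is (c), the conversion to DSIC. Given $(x,p)$ feasible to \indprim, define $y_c=\sum_{\w}x_{c,\w}\w\in[0,1]^n$ (treating allocations as characteristic vectors, which is the single-dimensional coordinate $\w_{i,v}$ for the unique $v\in\T_i$). Let $c_i^1<\cdots<c_i^{k_i}$ be the values in $\D_i$ and $c_i^{k_i+1}=m_i$. Introduce the rounding map $H(c)=(H_i(c_i))_i$ with $H_i(c_i)=c_i^{r+1}$ when $c_i\in(c_i^{r},c_i^{r+1}]$ and $H_i(c_i)=m_i$ when $c_i\geq m_i$. Then define $\A(c)=y_{H(c)}$ unless every $c_j$ exceeds $c_j^{\max}$, in which case set $\A(c)$ to be the VCG allocation. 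I would then verify: (i) for each fixed $c_{-i}$, the function $c_i\mapsto\A(c)_i$ is non-increasing in $c_i$ and eventually reaches $0$, which follows because $H_i$ is non-decreasing and the IC constraints \eqref{e3} of \indprim force $c_i\mapsto y_{(c_i,H_{-i}(c_{-i})),i}$ to be non-increasing on $\tD_i$; (ii) setting $q_i(c)=c_i\A(c)_i+\int_{c_i}^\infty\A(t,c_{-i})_i\,dt$ then yields a (DSIC, IR)-in-expectation mechanism by Myerson's characterization.

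The hard part is bounding the expected payment by $\sum_{c,i}\Pr_\D(c)\,p_{i,c}$. Here I would repeat the integration-by-parts / telescoping argument of Theorem~\ref{bicredn2} applied to the new LP: on profiles where $c_{-i}\in\tD_{-i}$, the Myerson integral over $t\in[c_i,\infty)$ telescopes into a sum of consecutive differences $c_i^{\ell}(y_{(c_i^\ell,c_{-i}),i}-y_{(c_i^{\ell+1},c_{-i}),i})$, each of which is bounded by the corresponding LP payment gap $p_{i,(c_i^\ell,c_{-i})}-p_{i,(c_i^{\ell+1},c_{-i})}$ via \eqref{e3}, while \eqref{e45} forces the tail at $m_i$ to vanish. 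The VCG region $c_i>c_i^{\max}$ contributes only on profiles outside $\D$ and so does not inflate the $\D$-expectation. Combining these pieces gives expected total payment at most the value of $(x,p)$, which (by step (b)) is within the same approximation factor of $\OPT$ as in the robust-BIC setting, proving the theorem.
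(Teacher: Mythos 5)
Your proposal is correct and follows essentially the same route as the paper: the paper likewise (i) extends Lemma~\ref{estim} to get estimates $m_i$ valid under the dominant-strategy requirement and enlarges the LP to \indprim / \indnewlp over $\tD=\prod_i\bD_i$ with $\Om(c)=\{\w:\w_i=\es\ \forall i\text{ with }c_i=m_i\}$; (ii) notes that for constant $n$ these LPs have polynomially many constraints so the dual/ellipsoid machinery of Theorems~\ref{optcm} and~\ref{bicredn2} carries over unchanged; and (iii) defines exactly your rounding map $H$, sets $\A(c)=y_{H(c)}$ off $\tD$ (VCG when all coordinates exceed $c_j^{\max}$), observes monotonicity of $c_i\mapsto\A(c)_i$, uses the Myerson payment, and bounds it against $p_{i,c}$ via the same telescoping/\eqref{e3} argument from Theorem~\ref{bicredn2}, with the tail vanishing because $y_{(m_i,c_{-i}),i}=0$.
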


\section*{Acknowledgments}
We thank the anonymous reviewers and referees of the conference and journal submissions
for various useful comments. 
Section~\ref{weaker} was prompted by questions raised by an anonymous referee of the
journal submission regarding the solution concepts \eqref{opta}-\eqref{optc}.


\appendix

\section{Proof of Lemma~\ref{estim}} \label{append-prelim}
Consider the following LP, which is the same as \eqref{primal} except that we only
consider $c\in\bigcup_i(\D_i\times\D_{-i})$. 
\begin{alignat}{3}
\min & \quad & \sum_{c\in\D}{\textstyle \Pr_\D}(c)
\Bigl(\sum_iq_{i,c}&+\kp\sum_{\w}x_{c,\w}\pub(\w)\Bigr) \tag{LP} \label{eprim} \\
\text{s.t.} 
&& \sum_{\w}x_{c,\w} & = 1 \qquad && \forall c \in {\bigcup_i(\D_i\times\D_{-i})} \label{ee2} \\
&& q_{i,(c_i,c_{-i})} - \sum_{\w}c_i(\w)x_{(c_i,c_{-i}),\w} & \geq 
q_{i,(c'_i,c_{-i})} -\sum_{\w}c_i(\w)x_{(c'_i,c_{-i}),\w} \ \
&& \forall i,c_i,c'_i\in\D_i,c_{-i}\in\D_{-i} \label{ee3} \\
&& q_{i,(c_i,c_{-i})} - \sum_{\w}c_i(\w)x_{(c_i,c_{-i}),\w} & \geq 0 \qquad 
&& \forall i, c_i\in\bD_i, c_{-i}\in\D_{-i} \label{ee4} \\
&& q,x  & \geq 0. \label{ee45}
\end{alignat}

Let $M=\bigl(\A,\{p_i\}\bigr)$ be an optimal mechanism. Recall that $\iopt$ is the
expected disutility of $M$.
Then, $M$ naturally yields a feasible solution $(x,q)$ to \eqref{eprim} of
objective value $\iopt$, where $x_{c,\w}=\Pr_M[\A(c)=\w]$ and $q_{i,c}=\E_M[p_i(c)]$.
Let $(\hx,\hq)$ be an {\em optimal basic solution} to \eqref{eprim}. 
Then, for some $N$ such that $\log N$ is polynomially bounded in the input size $\I$,
we can say that the values of all variables are integer multiples of $\frac{1}{N}$, and
$\log (N\hx_{c,\w}), \log (N\hq_{i,c})=\poly(\I)$ for 
all $i$, $c\in\bigcup_i(\D_i\times\D_{-i})$, $\w$.

First, we claim that we may assume that for every
$i, c_i\in\D_i, c_{-i}\in\D_{-i}$, if whenever $\hx_{c,\w}>0$ we have $\w_i=\es$
(where $c=(c_i,c_{-i})$), then  $\hq_{i,c}=0$. If not, then \eqref{ee3} implies that
$\hq_{i,(\tc_i,c_{-i})}-\sum_\w\tc_i(\w)x_{(\tc_i,c_{-i}),\w}\geq\hq_{i,c}$ for all
$\tc_i\in\D_i$ and decreasing $\hq_{i,(\tc_i,c_{-i})}$ by $\hq_{i,c}$ for all
$\tc_i\in\D_i$ continues to satisfy \eqref{ee3}--\eqref{ee45}.

Set $m_i:=\max\bigl(2\sum_{i,v\in\T_i}\max_{c_i\in\D_i}c_{i,v},N\sum_{i,c}\hq_{i,c}\bigr)$
for all $i$. So $\log m_i=\poly(\I)$. 
Recall that $\bD_i:=\D_i\cup\{m_i\bon_{\T_i}\}$ for all $i\in[n]$, 
and $\bD:=\bigcup_i(\bD_i\times\D_{-i})$. 

Now we extend $(\hx,\hq)$ to $(\tx,\tq)$ that assigns values also to type-profiles  
in $\bD\sm\bigcup_i(\D_i\times\D_{-i})$ so that constraints
\eqref{ee2}--\eqref{ee45} hold for all $i$, $c_i,c'_i\in\bD_i$, $c_{-i}\in\D_{-i}$.
First set $\tx_{c,\w}=\hx_{c,\w},\ \tq_{i,c}=\hq_{i,c}$ for all $i$, $\w$, 
$c\in\bigcup_i(\D_i\times\D_{-i})$. 
Consider $c\in\bD\sm\bigcup_i(\D_i\times\D_{-i})$, and let $i$ be such that
$c_i=m_i\bon_{\T_i}$ (there is exactly one such $i$). 
We ``run'' VCG on $c$ considering only the cost incurred by the players. 
That is, we set $\tx_{c,\w}=1$ for 
$\w=\w(c):=\arg\min_{\w\in\Om}\sum_i c_i(\w)$ and pay 
$\tq_{i,c}=\min_{\w\in\Om: \w_i=\es}\sum_jc_j(\w)-\sum_{j\neq i}c_j\bigl(\w(c)\bigr)$ to
each player $i$. 
Note that the choice of $m_i$ ensures that $\w(c)_i=\es$ and hence, $\tq_{i,c}=0$. 

We claim that this extension satisfies \eqref{ee2}--\eqref{ee45} for all 
$i$, $c_i,c'_i\in\bD_i$, $c_{-i}\in\D_{-i}$.
Fix $i$, $c_i,c'_i\in\bD_i$, $c_{-i}\in\D_{-i}$. 
It is clear that \eqref{ee2}, \eqref{ee45} hold.
If $c_i\in\D_i$, then \eqref{ee4} clearly holds; if $c_i=m_i\bon_{\T_i}$, then it 
again holds since $\tx_{c,\w}=1$ for $\w=\w(c)$ and $\w(c)_i=\es$.
To verify \eqref{ee3}, we consider four cases. If $c_i,c'_i\in\D_i$, then \eqref{ee3}
holds since $(\tx,\tq)$ extends $(\hx,\hq)$. If $c_i=c'_i=m_i\bon_{\T_i}$, then
\eqref{ee3} trivially holds. If $c_i\in\D_i,\ c'_i=m_i\bon_{\T_i}$, then \eqref{ee3} holds
since the RHS of \eqref{ee3} is 0 
(as $\tx_{(c'_i,c_{-i}),\w(c'_i,c_{-i})}=1$ and $\w(c'_i,c_{-i})_i=\es$).
We are left with the case $c_i=m_i\bon_{\T_i}$ and $c'_i\in\D_i$. If
whenever $\tx_{(c'_i,c_{-i}),\w}=\hx_{(c'_i,c_{-i}),\w}>0$ we have $\w_i=\es$, then we
also have $\tq_{i,(c'_i,c_{-i})}=\hq_{i,(c'_i,c_{-i})}=0$ by our earlier claim, so the RHS of
\eqref{ee3} is 0, and \eqref{ee3} holds. Otherwise, we have 
$\sum_\w c_i(\w)\tx_{(c'_i,c_{-i}),\w}\geq\frac{m_i}{N}\geq\hq_{i,(c'_i,c_{-i})}$, so
the RHS of \eqref{ee3} is at most 0, and \eqref{ee3} holds.

Thus, we have shown that $(\tx,\tq)$ is a feasible solution to \eqref{primal}. Now we can
apply Theorem~\ref{multiround} to extend $(\tx,\tq)$ and obtain a
support-based-(IC-in-expectation, IR) mechanism $M^*$ whose expected disutility is at most   
$\sum_{c,i}\Pr_{\D}(c)\bigl(\tq_{i,c}+\sum_{\w}\tx_{c,\w}\pub(\w)\bigr)\leq\iopt$. 
Since $\tx_{(m_i\bon_{\T_i},c_{-i}),\w}>0$ implies that $\w_i=\es$ for all $i$, $M^*$
satisfies the required conditions. \hfill\qedsymbol

\section{Inferiority of \boldmath $k$-lookahead procurement auctions} 
\label{append-lookahead} 
The following $k$-lookahead auction was proposed by~\cite{DobzinskiFK11} for the
single-item revenue-maximization problem generalizing the 1-lookahead auction considered 
by~\cite{Ronen,RonenS}: on input $v=(v_1,\ldots,v_n)$, pick the set $I$ of $k$ players
with highest values, and run the revenue-maximizing (DSIC, IR) mechanism for player-set
$I$ where the distribution we use for $I$ is the conditional distribution of the values
for $I$ given the values $(v_i)_{i\notin I}$ for the other players. Dobzinski et
al.~\cite{DobzinskiFK11} show that the $k$-lookahead auction achieves a constant-fraction
of the revenue of the optimal (DSIC, IR) mechanism.

For any $k\geq 2$, we can consider an analogous definition of $k$-lookahead auction for
the single-item procurement problem: on input $c$, we pick the set $I$ of $k$ players with
smallest costs, and run the payment-minimizing support-based-(IC-in-expectation, IR) mechanism
for $I$ for the conditional distribution of $I$'s costs given $(c_i)_{i\notin I}$. We call
this the {\em $k$-lookahead procurement auction}. 
The following example shows that the expected total payment of the $k$-lookahead
procurement auction can be arbitrarily large, even when $k=n-1$ (that is, we drop only 1 
player), and compared to the optimal expected total payment of even a deterministic (DSIC, 
IR) mechanism.  

Let $t=K+\e$ where $\e>0$, and $\dt>0$.
The distribution $\D$ consists of $n$ points: each $c$ in 
$\{c: c_n=t,\ \ \exists i\in[n-1]\text{ s.t. }c_i=0,\ \ c_j=K\ \forall j\neq i,n\}$ 
has probability $\Pr_{\D}(c)=\frac{1-\dt}{n-1}$, and 
the type-profile $c$ where $c_i=K\ \forall i\neq n,\ c_n=t$ has probability
$\Pr_{\D}(c)=\dt$.  

Let $k=n-1$. The $k$-lookahead procurement auction will always select the player-set 
$I=\{1,\ldots,n-1\}$, and the conditional distribution of values of players in $I$ is
simply $\D$. Let $M'$ be the support-based-(IC-in-expectation, IR) mechanism for the players in
$I$ under this conditional distribution $\D$.
Suppose that on input $(K,K,\ldots,K,t)$, the $k$-lookahead auction (which runs $M'$) buys
the item from player $i\in I$ with probability $x_i$. Clearly, $\sum_{i\in I}x_i=1$. Then,
on the input $\tc$ where $\tc_i=0$, $\tc_j=K$ for all $j\neq i,n$, $\tc_n=t$, the
mechanism must also buy the item from player $i$ with probability at least $x_i$ since
$M'$ is support-based IC. So since $M'$ is support-based (IC-in-expectation, IR), the payment to
player $i$ under input $\tc$ is at least $K$, and therefore the expected total payment of
the $k$-lookahead auction is at least 
$K(\sum_{i\in I}x_i)\cdot\frac{1-\dt}{n-1}=\frac{K(1-\dt)}{n-1}$. 

Now consider the following mechanism $M=\bigl(\A,\{p_i\}\bigr)$. Consider input $c$. If
some player $i<n$ has $c_i=0$, $M$ buys the item from such a player $i$ (breaking ties in
some fixed way). Otherwise, if $c_n\leq t$, $M$ buys from player $n$; else, $M$ buys from
the player $i$ with smallest $c_i$. 
It is easy to verify that for every $i$ and $c_{-i}$, this allocation rule $\A$ is
monotonically decreasing in $c_i$. 
Let $p_i(c)=0$ if $M$ does not buy the item from $i$ on input $c$, and 
$\max\{z: \text{$M$ buys the item from $i$ on input $(z,c_{-i})$}\}$ otherwise.    
By a well-known fact, (see, e.g., Theorem 9.39 in~\cite{agt}), $M$
is DSIC and IR. Then, the total payment under any $c\in\D$ for which $c_i=0$ for some $i$, 
is 0, and the total payment under the input where $c_i=K$ for all $i\neq n$, $c_n=t$ is
$t$. So the expected total payment of $M$ is $t\dt$.

Thus the ratio of the expected total payments of $M'$ and $M$ is at least
$\frac{K(1-\dt)}{t\dt (n-1)}$, which can be made arbitrarily large by choosing $\dt$ and
$\e=t-K$ small enough.

\section{Insights for revenue-maximization in packing domains} \label{append-packing}  
Our study of \paym problems also leads to some interesting insights into the
revenue-maximization problem in packing settings with correlated players. 
We state two results 
that are obtained via relatively-simple observations, 
but we believe are nevertheless of interest.

In Section~\ref{append-packextn}, we justify our comment in the Introduction that the
problem of extending an LP solution to a suitable mechanism becomes much easier in a
packing setting such as combinatorial auctions (CAs). 
We show that any solution to an LP-relaxation similar to \eqref{primal} for the
revenue-maximization problem in CAs can be extended to a (DSIC-in-expectation, IR)
mechanism without any loss in revenue. 
In Section~\ref{append-packapx}, we obtain a noteworthy extension of a result
in~\cite{DobzinskiFK11}. We show 
that in single-dimensional packing settings, a $\rho$-approximation algorithm for the SWM
problem can be used to obtain a $\rho$-approximation (DSIC-in-expectation, IR)-mechanism
for the revenue-maximization problem. 
We obtain this by noting that a $\rho$-approximation for the SWM problem can be 
used to obtain a $\rho$-approximate separation oracle for the dual of the
revenue-maximization LP, despite the fact that the separation problem is an SWM problem
possibly involving {\em negative-valued} inputs%
\footnote{Recall that negative costs are quite problematic for the \cm problem, and hence,
we had to resort to the method outlined in Section~\ref{single}.},
which then yields, in a fairly-standard way, a $\rho$-approximate solution to the
revenue-maximization LP.

\subsection{Extending LP solutions to (DSIC-in-expectation, IR)
  mechanisms}   \label{append-packextn} 
We consider the prototypical problem of combinatorial auctions; similar arguments
can be made for other packing domains. In CAs, a feasible 
allocation $\w$ is one that allots a disjoint set $\w_i$ of items (which could be
empty) to each player $i$, and player $i$'s value under allocation $\w$ is $v_i(\w_i)$,
where $v_i:2^{[m]}\mapsto\R_+$ is player $i$'s private valuation function. We use
$v_i(\w)$ to denote $v_i(\w_i)$. 
Let $V_i$ denote the set of all private types of player $i$, and $V_{-i}=\prod_{j\neq i}V_j$.
As before, let $\Om$ be the set of all feasible solutions.

We consider the following LP along the lines of \eqref{primal}.
Since we are in a packing setting, we do not need the $m_i$
estimates. We may assume that each $\D_i$ contains the valuation $0_i$, where $0_i(\w)=0$
for all $\w$, since if not, we can just add this to $\D_i$, and set $\Pr_\D(0_i,v_{-i})=0$.
Let $\bD':=\bigcup_i(\D_i\times\D_{-i})$. 
\begin{alignat}{3}
\max & \quad & \sum_{v\in\D}{\textstyle \Pr_\D(v)}&\sum_ip_{i,v} \tag{R-P} \label{packp} \\
\text{s.t.} && \sum_{\w\in\Om}x_{v,\w} & \leq 1 \qquad && \forall v\in {\bD'} \label{pe2} \\
&& \sum_{\w\in\Om}v_i(\w)x_{(v_i,v_{-i}),\w} - p_{i,(v_i,v_{-i})} & \geq 
\sum_{\w\in\Om}v_i(\w)x_{(v'_i,v_{-i}),\w} - p_{i,(v'_i,v_{-i})} \quad  
&& \forall i, v_i,v'_i\in\D_i, v_{-i}\in\D_{-i} \notag \\ 
&& \sum_{\w\in\Om}v_i(\w)x_{(v_i,v_{-i}),\w} - p_{i,(v_i,v_{-i})} & \geq 0 \qquad 
&& \forall i, v\in\bD' \notag \\ 
&& p,x  & \geq 0. \notag 
\end{alignat}

The above LP-relaxation is similar to the LP in~\cite{DobzinskiFK11} for
single-dimensional packing problems. For CAs, the LP in~\cite{DobzinskiFK11}
is subtly different from \eqref{packp}: their allocation variables 
encode the probability that a player $i$ receives a set $S$ of items. 
If we let the allocation space in \eqref{packp} be the set $\Om_\ext$ of extreme points of
the standard 
LP-relaxation for the CA problem, then our formulations coincide since a feasible solution
to the standard LP specifies the extent to which each player receives each set. 
(The convex-decomposition technique in~\cite{LaviS11} directly implies that an
integrality-gap verifying $\rho$-approximation algorithm for the SWM problem can be 
used to decompose the fractional allocation specified by the LP in~\cite{DobzinskiFK11}
scaled by $\rho$ into a distribution over $\Om$, and thereby obtain a solution to
\eqref{packp}.)  

Let $(\tx,\tp)$ be a solution to \eqref{packp}.
We convert $(\tx,\tp)$ to a (DSIC-in-expectation, IR) mechanism
$M=\bigl(\A,\{q_i\}\bigr)$ with no smaller expected total revenue.
Here, $\A(v)$ and $q_i(v)$ are the allocation-distribution and expected price of player
$i$ on input $v$. 
Since any support-based-(IC, IR)-in-expectation mechanism yields a feasible solution to
\eqref{packp}, this also shows that 
{\em any support-based-(IC, IR)-in-expectation mechanism for CAs can be extended to a
(DSIC-in-expectation, IR) mechanism without any loss in revenue}.  
 
Our argument is similar to that in the proof of Theorem~\ref{multiround}, but the packing
nature of the problem simplifies things significantly. 
We may assume that 
all constraints \eqref{pe2} are tight; otherwise if there is some $v\in\bD'$ with
$\sum_{\w\in\Om}\tx_{v,\w}<1$, then letting $\w^0$ be allocation where $\w^0_i=\es$ for
all $i$, we can increase $\tx_{v,\w_0}$ to make this sum 1 without affecting feasibility.

First, we set $\A(v)=\tx_v,\ q_i(v)=\tp_{i,v}$
for all $v\in\bD'$ and all $i$, so it is clear that the expected total revenue of $M$ is
the value of $(\tx,\tp)$. 

If $|\{i: v_i\notin\D_i\}|\geq 2$, then we give
everyone the empty-set and charge everyone 0. Otherwise, suppose 
$v_i\notin\D_i,\ v_{-i}\in\D_{-i}$. 
Let $\bv^{(i)}=\arg\max_{\tv_i\in\D_i}\bigl(\sum_\w v_i(\w)\tx_{(\tv_i,v_{-i}),\w}-\tp_{i,(\tv_i,v_{-i})}\bigr)$ 
and $\by^{(i)}=\tx_{(\bv^{(i)},v_{-i})}$.
For $\w\in\Om$, let $\proj_i(\w)$ denote the allocation where player $i$ receives
$\w_i\sse[m]$, and the other players receive $\es$.
Viewing $\A(v)$ as the random variable specifying the allocation selected, 
we set $\A(v)=\proj_i(\w)$ with probability $\by^{(i)}_\w$. We set $q_i(v)=\tp_{i,(\bv^{(i)},v_{-i})}$.
Since $0_i\in\D_i$, we have $\E_\A[v_i(\A(v))]-q_i(v)\geq
\sum_\w 0_i(\w)\tx_{(0_i,v_{-i}),\w}-\tp_{i,(0_i,v_{-i})}\geq 0$, so $M$ is
IR-in-expectation.

To see that $M$ is DSIC in expectation, consider some $i$, $v_i,v'_i\in V_i$, $v_{-i}\in V_{-i}$. 
If $v_{-i}\notin\D_{-i}$, then player $i$ always receives the empty set and pays 0.
Otherwise, we have ensured by definition that player $i$ does not benefit by lying.

\subsection{Utilizing approximation algorithms for the SWM problem}  
\label{append-packapx}
We briefly sketch how to utilize a $\rho$-approximation algorithm for the SWM problem to
obtain a $\rho$-approximation (DSIC-in-expectation, IR)-mechanism for the
revenue-maximization problem in single-dimensional settings. 
Similar arguments apply to packing settings with additive
types. 

Consider again the LP \eqref{packp}. For all $i$, $\w$, we now have 
$v_i(\w)=v_i\al_{i,\w}$, where $v_i\in\R$ is $i$'s private type and $\al_{i,\w}\geq 0$ is
public knowledge. This is identical to the LP in~\cite{DobzinskiFK11} for single-parameter
revenue-maximization problems. 
It will be convenient to view $\w\in\Om$ as the vector
$\{\al_{i,\w}\}_{i\in[n]}\in\R_+^n$. Since we are in a packing setting, $\Om$ is downward
closed, so $\w\in\Om$ and $\w'\leq\w$ implies that $\w'\in\Om$.
The dual of \eqref{packp} is:
\begin{alignat}{3}
\min & \quad && \qquad \sum_v \ga_v \tag{R-D} \label{packd} \\
\text{s.t.} & && 
\sum_{i:v\in\D_i\times\D_{-i}}\Bigl(\sum_{v'_i\in\D_i}\bigl(v_i(\w)y_{i,(v_i,v_{-i}),v'_i} && -
v'_i(\w)y_{i,(v'_i,v_{-i}),v_i}\bigr)+v_i(\w)\be_{i,v}\Bigr) \notag \\[-1.25ex]
& && && \leq \ga_v \qquad \qquad \qquad \qquad \ \ \forall v\in{\bD'}, \w\in\Om \label{pe5} \\
& && \sum_{v'_i\in\D_i}\bigl(y_{i,(v_i,v_{-i}),v'_i} - y_{i,(v'_i,v_{-i}),v_i}\bigr) && +\ \be_{i,v_i,v_{-i}}
\geq{\textstyle \Pr_D}(v) 
\qquad \forall i, v\in\bD' \label{pe6} \\ 
& && y,\be,\gm \geq 0. \label{pe7}
\end{alignat}

\newcommand{\optpack}{\ensuremath{\mathsf{optr}}}

Let $\optpack$ be the common optimal value of \eqref{packp} and \eqref{packd}.
Define $\tht_i^v=\sum_{v'_i\in\D_i}(v_iy_{i,(v_i,v_{-i}),v'_i}-v'_iy_{i,(v'_i,v_{-i}),v_i})+v_i\be_{i,v}$
if $v\in\D_i\times\D_{-i}$, and $0$ otherwise.
Then, the separation problem for \eqref{packd} amounts to determining if 
$\max_{\w\in\Om}\sum_i\tht_i^v\al_{i,\w}\leq\gm_v$ for every $v\in\bD'$; that is, 
solving an SWM problem over the allocation space $\Om$ under the input
$\tht^v=\{\tht^v_i\}_{i\in[n]}$ for every $v$. 
Given a $\rho$-approximation algorithm $\A$ for the SWM
problem (where $\rho\geq 1$) that only works with nonnegative inputs, we can obtain a
$\rho$-approximate 
solution for the input $\tht^v$ as follows. Set $(\tht^+)^v_i:=\max(0,\tht^v_i)$, use $\A$
with the input $(\tht^+)^v=\{(\tht^+)^v_i\}_{i\in[n]}$ to obtain a solution $\w'$, and let
$\w''_i=\w'_i$ if $\tht_i\geq 0$ and $0$ otherwise. 
Thus, $\A$ can be used to approximately separate over constraints \eqref{pe5}.
We now argue that this implies that we can obtain a solution to \eqref{packp} of value at
least $\optpack/\rho$. This follows a routine argument in the approximation-algorithms
literature (see, e.g.,~\cite{JainMS03}).

Define $\Pc(\nu):=\{(y,\beta,\gm): \eqref{pe5}\text{--}\eqref{pe7},\ \sum_v\gm_v\leq\nu\}$. 
Note that $\optpack$ is the smallest $\nu$ such that $\Pc(\nu)\neq\es$. 
Given $\nu, y,\beta,\gm$, we either show that
$(y,\beta,\gm\rho)\in\Pc(\nu\rho)$, or we exhibit a hyperplane separating $(y,\beta,\gm)$ from
$\Pc(\nu)$. To do this, we first check if $\sum_v\gm_v\leq\nu$,
\eqref{pe6}, \eqref{pe7} hold, and if not use the appropriate inequality as the
separating hyperplane. Next, for every $v\in\bD'$, we use $\A$ as specified above to obtain some
$\w''\in\Om$. If in this process, the LHS of \eqref{pe5} exceeds $\gm_v$ for some $v$,
then we return the corresponding inequality as the separating hyperplane.
Otherwise, for all $v\in\bD'$ and all $\w\in\Om$, the LHS of \eqref{pe5} is at most
$\gm\rho$, and so $(y,\beta,\gm\rho)\in\Pc(\nu\rho)$.

Thus, for a fixed $\nu$, in polynomial time, the ellipsoid
method either certifies that $\Pc(\nu)=\es$, or returns a point $(y,\beta,\gm)$ with
$(y,\beta,\gm\rho)\in\Pc(\nu\rho)$. 
We find the smallest value $\nu^*$ (via binary search) such that the ellipsoid method run
for $\nu^*$ (with the above separation oracle) returns a solution $(y^*,\beta^*,\gm^*)$ with
$(y^*,\beta^*,\gm^*\rho)\in\Pc(\nu^*\rho)$; hence, $\optpack\leq\nu^*\rho$. For any
$\e>0$, running the ellipsoid method for $\nu^*-\e$ yields a polynomial-size certificate
for the emptiness of $\Pc(\nu^*-\e)$. This consists of the polynomially many violated
inequalities returned by the separation oracle during the execution of the ellipsoid
method and the inequality $\sum_v\gm_v\leq\nu^*-\e$. By duality (or
Farkas' lemma), this means that here is a polynomial-size solution $(\tx,\tp)$ to
\eqref{packp} whose value is at least $\nu^*-\e$. Taking $\e$ to be 
$1/\exp(\text{input size})$ (so $\ln\bigl(\frac{1}{\e}\bigr)$ is polynomially bounded),
this also implies that $(\tx,\tp)$ has value at least $\nu^*\geq\optpack/\rho$. 

We can now convert $(\tx,\tp)$ to a (DSIC-in-expectation, IR) mechanism with revenue at
least $\optpack/\rho$ via the procedure described in Section~\ref{append-packextn}.

\end{document}